\documentclass[journal]{IEEEtran}
\usepackage{amsmath,amsfonts}
\usepackage{algorithmic}
\usepackage{algorithm}
\usepackage{array}
\usepackage[caption=false,font=normalsize,labelfont=sf,textfont=sf]{subfig}
\usepackage{textcomp}
\usepackage{stfloats}
\usepackage{url}
\usepackage{verbatim}
\usepackage{graphicx}
\usepackage{cite}
\usepackage{bm}
\usepackage{amssymb}
\usepackage{mdwmath}
\usepackage{mdwtab}
\usepackage{eqparbox}
\usepackage{cuted}
\usepackage[hidelinks]{hyperref}

\newtheorem{proposition}{\bfseries Proposition}

\newtheorem{remark}{\bfseries Remark}

\begin{document}

\title{Rotatable Antenna Enabled Multi-Satellite Communications: Joint Satellite Selection and Boresight Trajectory Optimization}

\author{Xingxiang Peng, Qingqing Wu, Haiying Hu, Wen Chen, and Xin Lin
\thanks{
    X. Peng, Q. Wu, and W. Chen are with the School of Integrated Circuits, Shanghai Jiao Tong University, Shanghai 200240, China (e-mail:
    \href{mailto:peng_xingxiang@sjtu.edu.cn}{\nolinkurl{peng_xingxiang@sjtu.edu.cn}};
    \href{mailto:qingqingwu@sjtu.edu.cn}{\nolinkurl{qingqingwu@sjtu.edu.cn}};
    \href{mailto:wenchen@sjtu.edu.cn}{\nolinkurl{wenchen@sjtu.edu.cn}}).}
\thanks{
    H. Hu is with Innovation Academy for Microsatellites of Chinese Academy of Science, Shanghai 201203, China (e-mail: 
    \href{mailto:huhy@microsate.com}{\nolinkurl{huhy@microsate.com}}). }
\thanks{
    X. Lin is with Shanghai Institute of Satellite Engineering, Shanghai 201109, China (e-mail: 
    \href{mailto:scar07@sina.com}{\nolinkurl{scar07@sina.com}}). 
    }
}

\IEEEaftertitletext{\vspace{-2\baselineskip}}

\maketitle

\begin{abstract}
This paper considers a satellite-to-ground communication system in which a ground station (GS) equipped with independently rotatable antenna (RA) elements jointly decodes independent streams from multiple low-Earth-orbit (LEO) satellites over a shared time--frequency resource. Specifically, we formulate a two-timescale throughput maximization problem under exogenous cochannel interference, capturing serving-set composition, RA-enabled channel shaping, time-varying satellite geometry, and mechanically constrained inter-epoch reconfiguration. We first characterize the joint effects of interference-whitened channel strength and spatial separability on multi-satellite reception, motivating the joint design of satellite selection and RA control. With the RA trajectory fixed, we establish the monotone submodularity of the epoch-level selection objective and construct an incumbent-tight modular lower-bound surrogate, leading to an efficient discrete Minorization-Maximization (MM) selection algorithm. For fixed serving sets, we develop slew-feasible RA updates based on Riemannian gradients and organize them into a two-color parallel update scheme. The two blocks are integrated into a monotone alternating algorithm with guaranteed objective convergence. Simulations demonstrate consistent gains over benchmark schemes and reveal an optimal balance between channel strength and spatial separability. The results further show that satellite selection is particularly important in underloaded and actuator-limited regimes, whereas RA shaping becomes more influential near full spatial loading.
\end{abstract}

\begin{IEEEkeywords}
    Boresight trajectory optimization, low-Earth-orbit satellites, multi-satellite communications, rotatable antennas, satellite selection.
\end{IEEEkeywords}

\section{Introduction}\label{sec:introduction}

\IEEEPARstart{U}{biquitous} connectivity, service continuity, and global access envisioned for sixth-generation (6G) networks are accelerating the integration of terrestrial infrastructure with non-terrestrial networks (NTNs) \cite{9861699}. Among the available NTN platforms, low-Earth-orbit (LEO) constellations are particularly promising because their shorter propagation distances can reduce latency relative to geostationary systems, while dense satellite deployments support wide-area coverage and provide rich spatial and network-level degrees of freedom \cite{9852737}. These advantages, however, are accompanied by highly dynamic network topology and intensive spectrum reuse, which complicate resource coordination and intensify inter- and intra-system interference. Fully realizing the potential of large-scale LEO constellations therefore requires system-level design and coordination strategies that can exploit their multidimensional diversity while adapting to orbital evolution.

Accordingly, a broad range of advanced signal-processing, resource-management, and network-coordination techniques has been developed for highly dynamic LEO environments. For instance, beam hopping, illumination-pattern design, and spatiotemporal scheduling enable dynamic radio-resource adaptation to time-varying and nonuniform traffic demands \cite{9968247,10694728,10486925}, while hybrid beamforming, coordinated transmission, and cell-free NTN architectures exploit multi-satellite cooperation \cite{10380500,10679987,10787138}. At the infrastructure level, gateway planning and feeder-link switching leverage geographical and temporal diversity \cite{10273398,11316666}, whereas feeder-link MIMO and multigateway processing improve spatial multiplexing and interference management \cite{9781427,9684855}. Interference mitigation, coexistence analysis, and interference-aware satellite selection further address spectrum sharing and connectivity in dense LEO deployments \cite{10596023,10797644,11356005}. Despite their diverse objectives and operating layers, these approaches primarily optimize network control and signal-processing strategies over channels determined by the propagation environment. Their achievable performance may therefore remain limited by unfavorable channel characteristics, including strong spatial correlation, limited effective rank, and directional mismatch.

To move beyond optimizing communication strategies over externally determined channels, recent studies have explored physical-domain reconfigurability to actively reshape the effective wireless channel. Intelligent reflecting surfaces (IRSs) provide a representative environment-side solution by introducing programmable reflected paths and have been applied to multi-satellite downlink beamforming, joint satellite scheduling with active/passive beamforming, and cooperative multi-satellite uplink reception \cite{10494510,11251206,11454451}. Movable antennas (MAs), by contrast, exploit transceiver-side spatial reconfigurability by adjusting antenna positions and the resulting array geometry \cite{10318061}. In satellite communications, the element positions of a satellite-mounted MA array have been jointly optimized with time-varying beamforming to adapt its coverage pattern and suppress signal leakage toward unintended regions \cite{10806489}. More recently, an MA-equipped LEO ground station has been investigated by jointly designing antenna positions and time-varying beamforming to improve average communication performance while mitigating interference in dense satellite deployments \cite{11303331}. These MA studies demonstrate the potential of transceiver-side geometric reconfiguration in both spaceborne and ground-based satellite architectures.

Complementary to antenna-position reconfiguration, making element orientations reconfigurable introduces an additional transceiver-side physical degree of freedom by exploiting practical antenna directivity. Rotatable-antenna (RA) technology harnesses this freedom by controlling orientation-dependent element gains before baseband processing, thereby reshaping the effective channels and redistributing array sensitivity across desired and interfering directions \cite{11427014,zheng2026rotatable,peng2026,11568555}. This capability is especially promising for LEO communications, where far-field, line-of-sight-dominated links are strongly governed by angular geometry, while orbital motion makes desired and interfering directions evolve continuously yet predictably. RA can thus turn directional dynamics from a source of channel mismatch into a resource for proactive channel shaping. Recent studies have explored RA-assisted designs for multi-user uplink, integrated sensing and communication, interference mitigation, and cell-free transmission \cite{11520842,11520277,11489290,11635960}. In LEO satellite communications, a recent study \cite{Ma2026RASatellite} investigates a point-to-point link with RA arrays deployed at both the satellite and ground terminals, demonstrating the benefits of per-element boresight alignment and orbital-motion-assisted tracking for a single rapidly varying satellite direction. This single-link formulation, however, does not address serving-set composition, concurrent reception of independent satellite streams, or joint channel shaping across multiple desired and interfering directions at a common ground station.

Against this background, we consider an RA-enhanced satellite-to-ground communication system in which a ground station (GS) selects a serving set of LEO satellites and jointly decodes their independent streams over a shared time--frequency resource. Satellite selection determines the composition of the received multi-satellite channel, whereas RA steering reshapes its effective link gains, spatial separability, and susceptibility to external interference. Their mutual dependence motivates the joint optimization of satellite selection and RA control. In particular, directionally clustered satellites may share favorable RA gains but produce highly correlated channel responses, whereas a more spatially diverse serving set may improve spatial separability at the cost of reduced directional-gain sharing. Thus, multi-satellite reception is governed jointly by effective channel strength and spatial separability, with external cochannel interference further altering this balance in the interference-whitened domain. The resulting design naturally exhibits a two-timescale structure, with the serving set and RA boresights held fixed within each control epoch, while satellite geometry and channel coefficients vary at the slot level and the digital receive processing is updated accordingly. Moreover, finite RA actuation speed couples the boresight configurations across adjacent epochs, introducing mechanical trajectory constraints into the joint design. The main contributions of this paper are summarized as follows.

\begin{itemize}
    \item We develop a two-timescale framework for RA-enhanced satellite-to-ground communications under exogenous cochannel interference and formulate the joint optimization of epoch-held serving sets and RA boresight trajectories to maximize throughput. The formulation captures the coupling among serving-set composition, RA-enabled channel shaping, time-varying satellite geometry, and mechanically constrained inter-epoch reconfiguration.

    \item We characterize the interference-whitened strength--separability structure of RA-enhanced multi-satellite reception, showing that each candidate satellite's marginal capacity contribution depends jointly on its effective channel strength and its spatial complementarity with the selected channels. We further establish that, with the RA trajectory fixed, the epoch-level satellite-selection objective is monotone and submodular, revealing a diminishing-return structure that enables efficient capacity-aware selection.

    \item We develop an efficient and highly parallelizable alternating framework for joint satellite selection and boresight trajectory design. For satellite selection, we construct an incumbent-tight modular lower-bound surrogate and develop a discrete Minorization-Maximization (MM) algorithm, with all epoch-level problems solved in parallel. For RA-trajectory optimization, we use Riemannian gradients to construct slew-feasible local updates and exploit the chain-structured inter-epoch coupling through a two-color parallel update scheme. The resulting alternating updates generate a nondecreasing throughput and guarantee convergence of the objective value.

    \item Extensive simulations validate both the strength--separability interpretation and the effectiveness of the proposed joint design. The results reveal an optimal balance between channel strength and spatial separability, rather than either strongest-link clustering or maximum angular separation. They also show that satellite selection is particularly important in underloaded and actuator-limited regimes, whereas RA shaping becomes more influential near full spatial loading. The complementarity between satellite selection and RA shaping becomes more pronounced under severe external interference.
\end{itemize}

The remainder of this paper is organized as follows. Section~\ref{sec:system_model} presents the two-timescale system model and formulates the joint satellite-selection and boresight-trajectory optimization problem. Section~\ref{sec:algorithm} develops the strength--separability characterization and the proposed joint optimization framework, together with its convergence and complexity analyses. Section~\ref{sec:numerical_results} presents the numerical results and discusses the resulting insights, while Section~\ref{sec:conclusion} concludes the paper.

% \emph{Notations:} Scalars, vectors, and matrices are denoted by italic letters, boldface lowercase letters, and boldface uppercase letters, respectively. The sets of real and complex numbers are denoted by $\mathbb{R}$ and $\mathbb{C}$, respectively. The operators $(\cdot)^T$, $(\cdot)^*$, $(\cdot)^H$, and $(\cdot)^\dagger$ denote transpose, complex conjugate, conjugate transpose, and pseudoinverse, respectively. Moreover, $\|\cdot\|_2$, $\operatorname{tr}(\cdot)$, $\det(\cdot)$, and $\Re\{\cdot\}$ denote the Euclidean norm, trace, determinant, and real part, respectively. The imaginary unit is denoted by $\mathrm{j}\triangleq\sqrt{-1}$. The matrix $\bm{I}_d$ denotes the $d\times d$ identity matrix, with the dimension omitted when clear from context, while $\operatorname{diag}(\cdot)$ constructs a diagonal matrix. For a vector $\bm{x}$, $[\bm{x}]_m$ denotes its $m$th entry, while for a set $\mathcal{S}$, $|\mathcal{S}|$ denotes its cardinality. In addition, $[x]_+\triangleq\max\{x,0\}$, $\mathbf{1}\{\cdot\}$ denotes the indicator function, and $\mathcal{CN}(\bm{0},\bm{\Sigma})$ denotes a circularly symmetric complex Gaussian distribution with covariance matrix $\bm{\Sigma}$.

\section{System Model and Problem Formulation}
\label{sec:system_model}

As illustrated in Fig.~\ref{fig:sysModel}, we consider a satellite-to-ground communication system in which an RA-equipped GS concurrently receives independent streams from multiple LEO satellites over a shared time--frequency resource. The system comprises a serving constellation and an external constellation. Specifically, at each control epoch, the GS selects a subset of satellites from the serving constellation for concurrent transmission, while the remaining satellites stay inactive on the considered resource. By contrast, the external constellation operates independently of the serving constellation, and its satellites may transmit concurrently over the same resource, thereby contributing exogenous cochannel interference.

\begin{figure}[t]
    \centering
    \includegraphics[width=0.45\textwidth]{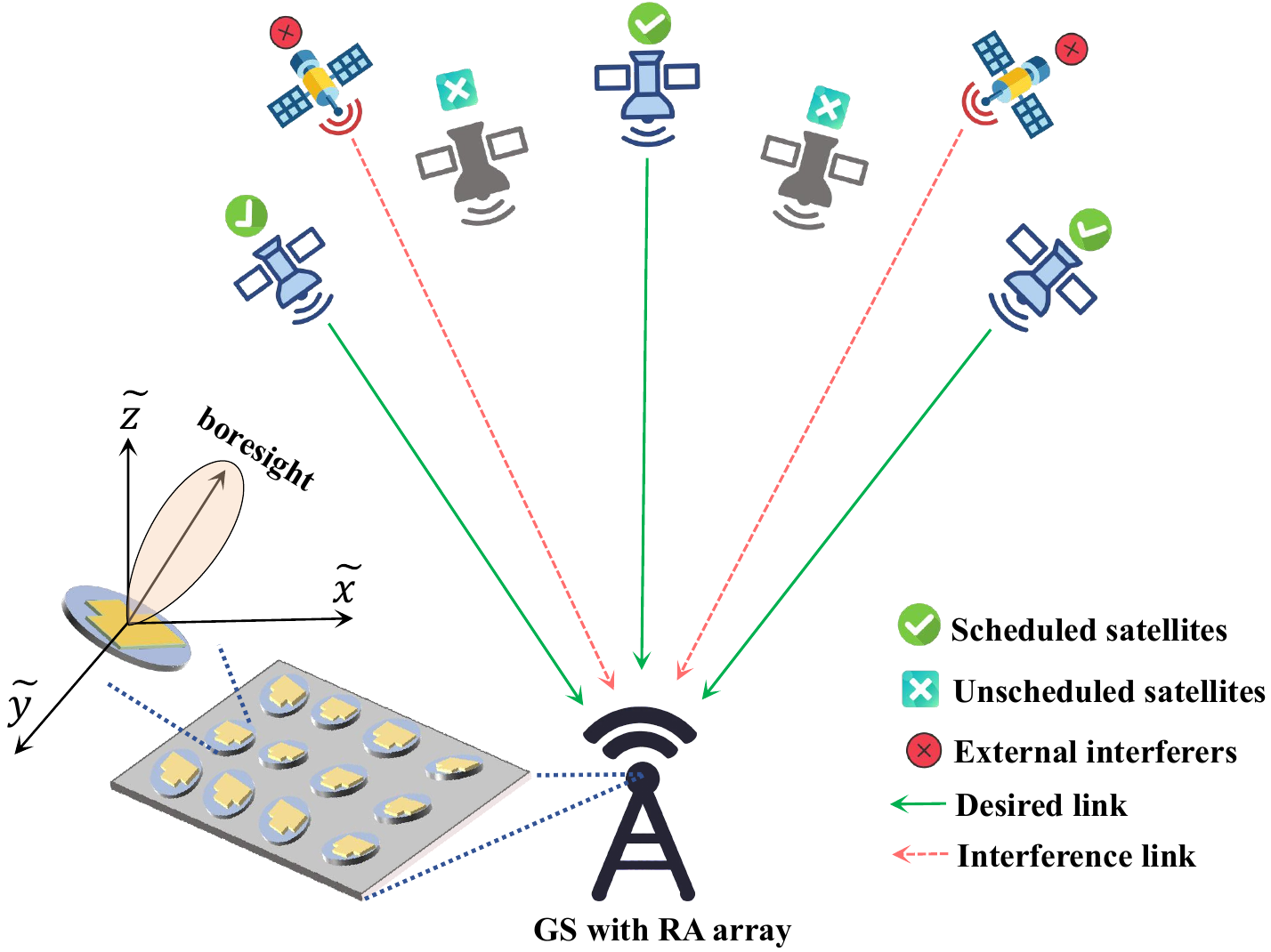}
    \caption{Illustration of the RA-enhanced satellite-to-ground communication system with satellite selection and external cochannel interference.}
    \label{fig:sysModel}
\end{figure}

\subsection{Two-Timescale Operation Protocol}
\label{subsec:two_timescale}

To account for the slower mechanical reconfiguration of the RAs relative to digital receive processing, we adopt the two-timescale protocol illustrated in Fig.~\ref{fig:two_timescale}. Satellite selection and RA reconfiguration are performed at the epoch level, whereas the channel state is evaluated and the digital receiver is updated at the slot level. The observation period contains $N$ data-bearing slots of duration $\Delta t$, indexed by $\mathcal{N}\triangleq\{1,\ldots,N\}$, which are partitioned into $L$ control epochs. Each epoch contains $Q\triangleq N/L$ consecutive slots, with
\begin{align}
    \mathcal{N}_\ell &\triangleq \{(\ell-1)Q+1,\ldots,\ell Q\}, \quad \ell=1,\ldots,L.
\end{align}
Each epoch $\ell\geq2$ is preceded by a guard interval of duration $T_{\rm g}$, during which payload reception is suspended while the next serving set is activated and the RA elements slew to their target boresights. For the first epoch, however, the configuration is established before $t=0$ and therefore incurs no guard overhead within the observation period. Hence, the total observation duration $T_{\rm ob}$ and the midpoint time $t_n$ of data-bearing slot $n$ are given by
\begin{subequations}
\begin{align}
    T_{\rm ob} &=N\Delta t+(L-1)T_{\rm g}, \label{eq:slot_time1}\\
    t_n &=\left(n-\frac{1}{2}\right)\Delta t +\left(\left\lceil\frac{n}{Q}\right\rceil-1\right)T_{\rm g},
    \label{eq:slot_time}
\end{align}
\end{subequations}
where the second term in $t_n$ accounts for the guard intervals preceding slot~$n$. Under this protocol, the serving set and RA boresights are held fixed throughout each control epoch. Within an epoch, the satellite geometry and channel coefficients are evaluated at each slot midpoint, and the minimum mean-square error (MMSE) receiver with successive interference cancellation (SIC) is performed using the instantaneous channel state information (CSI) \cite{Tse2005Fundamentals}.

\begin{figure}[t]
    \centering
    \includegraphics[width=0.45\textwidth]{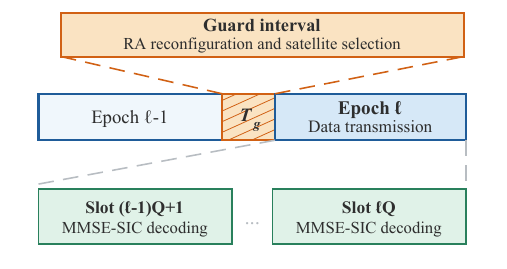}
    \caption{Illustration of the two-timescale operation: the serving satellite set and RA boresights are updated during each guard interval and remain fixed throughout the subsequent epoch, whereas the digital receiver is updated in every data-bearing slot.}
    \label{fig:two_timescale}
\end{figure}

\subsection{Dynamic Satellite Geometry and Candidate Satellite Sets}
\label{subsec:satellite_geometry}

We consider a representative large-scale LEO deployment based on a Walker--Delta constellation, specified by $\iota:S/J/F_{\rm W}$ \cite{Walker1984SatelliteConstellations}. Here, $\iota$ is the common orbital inclination, $S$ is the total number of satellites, $J$ is the number of circular orbital planes, and $F_{\rm W}\in\{0,\ldots,J-1\}$ is the inter-plane phasing parameter. Each plane contains $K\triangleq S/J$ uniformly spaced satellites. We index the orbital planes by $j\in\mathcal{J}\triangleq\{0,\ldots,J-1\}$ and the satellites within each plane by $k\in\mathcal{K}\triangleq\{0,\ldots,K-1\}$, and denote the corresponding satellite by $(j,k)$. Assuming that all satellites orbit at a common altitude $H$, the orbital radius and mean motion are given by $R_{\rm o}=R_{\rm E}+H$ and $\omega_{\rm o}=\sqrt{\mu/R_{\rm o}^{3}}$, respectively, where $R_{\rm E}$ and $\mu$ denote the Earth radius and gravitational parameter, respectively. The right ascension of the ascending node (RAAN) of plane $j$ and the argument of latitude of satellite $(j,k)$ are given by
\begin{subequations}
\begin{align}
    \Omega_j &=\Omega_0+\frac{2\pi j}{J}, \label{eq:raan}\\
    u_{j,k}(t) &=u_0+\omega_{\rm o}t+\frac{2\pi k}{K} +\frac{2\pi F_{\rm W}j}{JK},
    \label{eq:argument_latitude}
\end{align}
\end{subequations}
where $\Omega_0$ denotes the reference RAAN and $u_0$ is the initial orbital phase. Let $\bm{R}_x(\cdot)$ and $\bm{R}_z(\cdot)$ denote the standard right-handed rotation matrices about the $x$- and $z$-axes, respectively. With these definitions, the Earth-centered inertial (ECI) and Earth-centered Earth-fixed (ECEF) positions of satellite $(j,k)$ are given by
\begin{subequations}
\begin{align}
    \bm{r}_{j,k}^{\rm I}(t) &=\bm{R}_z(\Omega_j)\bm{R}_x(\iota)R_{\rm o} [\cos u_{j,k}(t),\sin u_{j,k}(t),0]^T, \label{eq:satellite_eci}\\
    \bm{r}_{j,k}^{\rm E}(t) &=\bm{R}_z(-\theta_{\rm G,0}-\omega_{\rm E}t) \bm{r}_{j,k}^{\rm I}(t),
    \label{eq:satellite_ecef}
\end{align}
\end{subequations}
where $\theta_{\rm G,0}$ and $\omega_{\rm E}$ are the initial Greenwich sidereal angle and Earth rotation rate, respectively.

Let $(\varphi_{\rm g},\lambda_{\rm g},h_{\rm g})$ denote the geocentric latitude, longitude, and altitude of the GS. Under a spherical-Earth model, its ECEF position is given by
\begin{equation}
    \bm{r}_{\rm g}^{\rm E} =(R_{\rm E}+h_{\rm g})
    \begin{bmatrix}
    \cos\varphi_{\rm g}\cos\lambda_{\rm g}\\
    \cos\varphi_{\rm g}\sin\lambda_{\rm g}\\
    \sin\varphi_{\rm g}
    \end{bmatrix}.
    \label{eq:ground_station_ecef}
\end{equation}
The ECEF-to-east--north--up (ENU) transformation at the GS is given by
\begin{equation}
    \bm{T}_{\rm g} =
    \begin{bmatrix}
    -\sin\lambda_{\rm g} & \cos\lambda_{\rm g} & 0\\
    -\sin\varphi_{\rm g}\cos\lambda_{\rm g} &-\sin\varphi_{\rm g}\sin\lambda_{\rm g} &\cos\varphi_{\rm g}\\
    \cos\varphi_{\rm g}\cos\lambda_{\rm g} &\cos\varphi_{\rm g}\sin\lambda_{\rm g} &\sin\varphi_{\rm g}
    \end{bmatrix}.
    \label{eq:ecef_to_enu}
\end{equation}
Hence, the GS-centered relative position, slant range, unit direction, and elevation angle of satellite $(j,k)$ are respectively obtained as
\begin{subequations}
\begin{align}
    \bm{\rho}_{j,k}(t) &= \bm{T}_{\rm g} \left(\bm{r}_{j,k}^{\rm E}(t)-\bm{r}_{\rm g}^{\rm E}\right), \label{eq:relative_enu}\\
    d_{j,k}(t) &=\|\bm{\rho}_{j,k}(t)\|_2, \label{eq:slant_range}\\
    \bm{d}_{j,k}(t) &=\frac{\bm{\rho}_{j,k}(t)}{d_{j,k}(t)}, \label{eq:unit_direction}\\
    \epsilon_{j,k}(t) &= \operatorname{atan2}\!\left( [\bm{\rho}_{j,k}(t)]_3, \sqrt{[\bm{\rho}_{j,k}(t)]_1^2+[\bm{\rho}_{j,k}(t)]_2^2} \right).
    \label{eq:elevation}
\end{align}
\end{subequations}
For slot-level modeling, the corresponding slant range, unit direction, and elevation angle evaluated at the slot midpoint $t_n$ are denoted by $d_{j,k}[n]$, $\bm{d}_{j,k}[n]$, and $\epsilon_{j,k}[n]$, respectively. A satellite is visible in slot $n$ if
\begin{equation}
    v_{j,k}[n] \triangleq \mathbf{1}\{\epsilon_{j,k}[n]\geq\epsilon_{\min}\} =1,
    \label{eq:visibility_indicator}
\end{equation}
where $\epsilon_{\min}$ is the minimum elevation angle. Since the serving set remains fixed within each control epoch, a satellite is retained as a candidate only if it satisfies the elevation requirement throughout that epoch. Accordingly, the candidate set for epoch $\ell$ is constructed as
\begin{equation}
    \mathcal{A}_\ell \triangleq \left\{(j,k):v_{j,k}[n]=1,\quad \forall n\in\mathcal{N}_\ell\right\}.
    \label{eq:epoch_candidate_set}
\end{equation}

\subsection{RA Reconfiguration and Channel Model}
\label{subsec:ra_channel}

\subsubsection{RA Configuration and Directional Gain}
The GS employs an $M_x\times M_y$ uniform planar array (UPA) comprising $M=M_xM_y$ RA elements. We align the UPA coordinate system with the GS-centered ENU frame, such that the array lies in the local horizontal plane with its $x$-, $y$-, and $z$-axes pointing east, north, and up, respectively. For $m_x\in\{0,\ldots,M_x-1\}$ and $m_y\in\{0,\ldots,M_y-1\}$, define the element index as $m=m_x+m_yM_x+1$. The position of element $m$ relative to the UPA center is then given in the ENU frame by
\begin{equation}
    \bm{p}_m =
    \begin{bmatrix}
    \left(m_x-\frac{M_x-1}{2}\right)d_x,\, \left(m_y-\frac{M_y-1}{2}\right)d_y,\, 0
    \end{bmatrix}^T,
    \label{eq:element_position}
\end{equation}
where $d_x$ and $d_y$ are the inter-element spacings. Each RA element $m$ is configured with an epoch-dependent boresight $\bm{f}_m[\ell]$, expressed in the GS-centered ENU frame. Let $\bm{e}_z=[0,0,1]^T$ denote the nominal array normal. The feasible steering region is represented by a spherical cap \cite{11489290}
\begin{equation}
    \mathcal{F}_{\rm RA} \triangleq \left\{ \bm{f}\in\mathbb{R}^3: \|\bm{f}\|_2=1,\, \bm{f}^T\bm{e}_z\geq\cos\theta_{\max} \right\},
    \label{eq:ra_feasible_set}
\end{equation}
where $\theta_{\max}\in[0,\pi/2)$ is the maximum steering angle from the nominal array normal $\bm{e}_z$. In addition to this steering-range constraint, the finite RA actuation speed limits the boresight variation across adjacent epochs. Let $\omega_{\max}$ denote the maximum angular speed. During a guard interval of duration $T_{\rm g}$, the maximum allowable angular displacement of each boresight is
\begin{equation}
    \Omega_{\rm g} \triangleq \omega_{\max}T_{\rm g}.
\end{equation}
Accordingly, consecutive boresights are constrained by
\begin{equation}
    \bm{f}_m^T[\ell-1]\bm{f}_m[\ell] \geq\cos\Omega_{\rm g}, \quad \forall m,\ \ell\geq2.
    \label{eq:slew_constraint}
\end{equation}
Note that the epoch-$1$ boresights are pre-positioned before the observation interval and are therefore not subject to an initial slew constraint. Given the boresight configuration, we adopt the front-side cosine-power pattern in \cite{Ant2016} to model the direction-dependent receive gain. For RA element $m$ and satellite $(j,k)$ in slot $n\in\mathcal{N}_\ell$, the receive antenna gain is
\begin{equation}
    G_{m,j,k}^{\rm rx}[n] = \kappa_{\max} \left[ \bm{f}_m^T[\ell]\bm{d}_{j,k}[n] \right]_+^{2p},
    \label{eq:ra_gain}
\end{equation}
where $p \geq 0$ is the directivity exponent, and $\kappa_{\max}=2(2p+1)$ normalizes the average gain over the sphere to unity. Increasing $p$ narrows the main lobe and increases the peak gain.

\subsubsection{Satellite-to-GS Channel Model}
We next combine the receive-side RA response with transmit-side directivity, large-scale propagation, and array-induced phase differences to obtain the satellite-to-GS channel. When scheduled for transmission, satellite $(j,k)$ is assumed to steer its transmit beam toward the GS, with $G_{j,k}^{\rm tx}[n]$ denoting the corresponding transmit antenna gain. In addition, the large-scale propagation gain is modeled as \cite{Ant2016}
\begin{equation}
    \beta_{j,k}[n] = \left(\frac{\lambda}{4\pi d_{j,k}[n]}\right)^2 \chi_{j,k}[n],
    \label{eq:path_gain}
\end{equation}
where $\lambda$ is the carrier wavelength, and $\chi_{j,k}[n]\in(0,1]$ collects rain attenuation, gaseous absorption, and other slowly varying losses. Because LEO satellites lie in the far field of the GS array, the satellite-to-element propagation distance can be approximated by
\begin{align}
    r_{m,j,k}[n]\simeq d_{j,k}[n]-\bm{p}_m^T\bm{d}_{j,k}[n].
\end{align}
With these components, we define the UPA phase-response vector, RA-dependent amplitude matrix, and common scalar channel coefficient as
\begin{subequations}
\begin{align}
    \bm{a}_{j,k}[n] &\triangleq \left[ e^{\mathrm{j}\frac{2\pi}{\lambda} \bm{p}_m^T\bm{d}_{j,k}[n]} \right]_{m=1}^{M}, \label{eq:upa_response}\\
    \bm{\Gamma}_{j,k}[\ell,n] &\triangleq \operatorname{diag}\left( \left\{ \sqrt{\kappa_{\max}} [\bm{f}_m^T[\ell]\bm{d}_{j,k}[n]]_+^p \right\}_{m=1}^{M} \right), \label{eq:ra_amplitude_matrix}\\
    c_{j,k}[n] &\triangleq \sqrt{\beta_{j,k}[n]G_{j,k}^{\rm tx}[n]} e^{-\mathrm{j}\frac{2\pi}{\lambda}d_{j,k}[n]}.
    \label{eq:common_channel_scalar}
\end{align}
\end{subequations}
Accordingly, the satellite-to-GS channel vector is given by
\begin{equation}
    \bm{h}_{j,k}[n] = c_{j,k}[n] \bm{\Gamma}_{j,k}[\ell,n] \bm{a}_{j,k}[n], \quad n\in\mathcal{N}_\ell.
    \label{eq:channel_decomposition}
\end{equation}
The time-varying satellite geometry governs the propagation gain and array phase response, while the RA boresights provide additional element-wise control over the direction-dependent gains. For notational simplicity, we henceforth use a single index $s$ to denote each satellite $(j,k)$.

\subsection{Multi-Satellite Reception and Sum Capacity}
\label{subsec:multi_satellite_reception}

Let $\mathcal{S}_\ell\subseteq\mathcal{A}_\ell$ denote the serving set selected in epoch $\ell$, with $|\mathcal{S}_\ell|\leq K_{\max}$, where $K_{\max}$ is the maximum number of simultaneously scheduled streams. The GS employs a fully digital receiver with one RF chain per RA element. The external constellation operates independently of the serving constellation and is not controlled by the GS. Let $\mathcal{I}_\ell$ denote the external cochannel interferer set in epoch $\ell$. For each $q\in\mathcal{I}_\ell$, the effective interference channel $\bm{h}_q[n]$ follows the same receive-side channel construction as in \eqref{eq:channel_decomposition}, with its own satellite geometry and link parameters. In particular, it depends on the RA boresights through the corresponding direction-dependent receive gains. The arrival directions and large-scale channel coefficients of the external interferers are assumed available at the GS through interference sensing and short-term prediction.

For each scheduled satellite $s\in\mathcal{S}_\ell$, let $P_s[n]$ denote its transmit power. For each external interferer $q\in\mathcal{I}_\ell$, let $P_q[n]$ denote its effective in-band transmit power. Let $x_r[n]\sim\mathcal{CN}(0,1)$ denote the independent symbol transmitted by source $r\in\mathcal{S}_\ell\cup\mathcal{I}_\ell$. The received signal in slot $n\in\mathcal{N}_\ell$ is then given by
\begin{align}
    \bm{y}[n] =&\sum_{s\in\mathcal{S}_\ell}\sqrt{P_s[n]}\bm{h}_s[n]x_s[n]\notag\\
    &+ \sum_{q\in\mathcal{I}_\ell} \sqrt{P_q[n]}\bm{h}_q[n]x_q[n] +\bm{n}[n],
    \label{eq:received_signal}
\end{align}
where $\bm{n}[n]\sim\mathcal{CN}(\bm{0},\sigma^2\bm{I}_M)$ is the receiver noise. The interference-plus-noise covariance matrix and total received covariance matrix are respectively given by
\begin{subequations}
\begin{align}
    \bm{R}_0[n] &\triangleq \sum_{q\in\mathcal{I}_\ell} P_q[n]\bm{h}_q[n]\bm{h}_q^H[n] +\sigma^2\bm{I}_M, \label{eq:base_covariance}\\
    \bm{R}_{\mathcal{S}_\ell}[n] &\triangleq \bm{R}_0[n] +\sum_{s\in\mathcal{S}_\ell} P_s[n]\bm{h}_s[n]\bm{h}_s^H[n].
    \label{eq:total_covariance}
\end{align}
\end{subequations}
With capacity-achieving MMSE-SIC reception, the slot sum capacity is expressed as \cite{Tse2005Fundamentals}
\begin{equation}
    C_{\mathcal{S}_\ell}[n] = B\left( \log_2\det\bm{R}_{\mathcal{S}_\ell}[n] -\log_2\det\bm{R}_0[n] \right),
    \label{eq:mac_sum_capacity}
\end{equation}
where $B$ is the resource bandwidth. Note that the SIC order affects the per-satellite rate allocation but not the sum capacity. Consistent with the two-timescale protocol, the MMSE-SIC receiver is updated at the slot level using instantaneous CSI, whereas satellite selection and RA control are performed at the epoch level based on predicted satellite geometry and large-scale channel and interference information.

\begin{remark}
\label{rem:kmax_choice}
With MMSE-SIC reception, the number of simultaneously scheduled satellites may exceed the number of GS receive antennas $M$, and the sum capacity can still increase as additional satellites are scheduled. However, streams beyond $M$ do not provide additional spatial degrees of freedom, and thus at high signal-to-noise ratio (SNR), the sum capacity grows only logarithmically with the number of scheduled satellites \cite{Tse2005Fundamentals}. Moreover, scheduling more satellites incurs increased channel-acquisition and decoding overhead.
\end{remark}

\subsection{Problem Formulation}
\label{subsec:problem}
Accounting for the guard intervals, the effective throughput over the observation period is defined as
\begin{equation}
    \mathcal{T}(\{\mathcal{S}_\ell\},\bm{F}_{\rm RA}) \triangleq \frac{\Delta t}{T_{\rm ob}} \sum_{\ell=1}^{L} \sum_{n\in\mathcal{N}_\ell} C_{\mathcal{S}_\ell}[n],
    \label{eq:effective_throughput}
\end{equation}
where $\bm{F}_{\rm RA}[\ell]\triangleq[\bm{f}_1[\ell],\ldots,\bm{f}_M[\ell]]$ denotes the RA configuration in epoch $\ell$, and $\bm{F}_{\rm RA}\triangleq\{\bm{F}_{\rm RA}[\ell]\}_{\ell=1}^{L}$ denotes the complete RA trajectory. While the serving sets and RA boresights remain fixed within each epoch, the satellite geometry and effective channels vary across its constituent slots. Hence, \eqref{eq:effective_throughput} evaluates each epoch-level decision over the corresponding intra-epoch geometry evolution. We jointly optimize the epoch-level serving sets and RA trajectory as
\begin{subequations}
\label{prob:joint_design}
\begin{align}
    \max_{\{\mathcal{S}_\ell\},\,\bm{F}_{\rm RA}} \quad &\mathcal{T}(\{\mathcal{S}_\ell\},\bm{F}_{\rm RA}) \label{prob:joint_design_obj}\\
    \mathrm{s.t.}\qquad &\bm{f}_m[\ell]\in\mathcal{F}_{\rm RA}, \quad \forall m,\ell, \label{prob:orientation_constraints}\\
    &\bm{f}_m^T[\ell-1]\bm{f}_m[\ell] \geq\cos\Omega_{\rm g}, \quad \forall m,\ \ell \geq 2, \label{prob:slew_constraints}\\
    &\mathcal{S}_\ell\subseteq\mathcal{A}_\ell,\quad |\mathcal{S}_\ell|\leq K_{\max}, \quad \forall\ell.
    \label{prob:selection_constraints}
\end{align}
\end{subequations}
Constraint~\eqref{prob:orientation_constraints} restricts each RA boresight to its feasible steering region. Constraint~\eqref{prob:slew_constraints} limits the angular displacement between consecutive epochs according to the available reconfiguration time, while constraint~\eqref{prob:selection_constraints} requires each serving set to be drawn from the corresponding candidate set and limits the number of simultaneously scheduled streams. Problem~\eqref{prob:joint_design} is a mixed discrete--continuous nonconvex problem, where satellite selection determines which channels contribute to the sum capacity, while the RA boresights reshape the desired and interfering channel gains. The slew constraints also couple the RA configurations across adjacent epochs.

\begin{remark}
    Since the selected satellite streams are jointly decoded and each satellite is subject to an independent peak-power constraint, the sum capacity is monotonically nondecreasing in transmit power. Hence, optimizing the transmit powers would set all scheduled satellites to their maximum power levels. We therefore fix them at their maximum values and focus on satellite selection and RA control.
\end{remark}

\section{Efficient Joint Satellite Selection and RA-Trajectory Optimization}
\label{sec:algorithm}

In this section, we first reveal the strength–separability tradeoff inherent in multi-satellite reception, which motivates the joint design of satellite selection and RA control. We then develop a discrete MM method for satellite selection based on an incumbent-tight modular surrogate, and a slew-feasible RA-trajectory optimization method based on a two-color parallel update scheme. Finally, we integrate the two optimization blocks into a monotone alternating algorithm and analyze its convergence and computational complexity.

\subsection{Geometric Insights for Multi-Satellite Reception}
\label{subsec:capacity_geometry}

We first consider two simultaneously received satellites to reveal how effective channel strength and spatial separability jointly shape the slot sum capacity. For notational simplicity, the slot index is omitted throughout this subsection. Define the interference-whitened effective channels as
\begin{equation}
    \bm{g}_s \triangleq \sqrt{P_s}\bm{R}_0^{-1/2}\bm{h}_s, \quad s\in\{1,2\}.
    \label{eq:whitened_channels}
\end{equation}
Since $\bm{R}_0\succ\bm{0}$, this whitening transformation is invertible and yields an equivalent representation in which the interference-plus-noise has identity covariance. Consequently, the Euclidean norms and mutual alignment of the whitened channels provide interference-aware measures of their effective strengths and spatial separability. Under this representation, the slot sum capacity in \eqref{eq:mac_sum_capacity} can be equivalently expressed as
\begin{equation}
    C_{\{1,2\}} = B\log_2\det\left( \bm{I}_M+\bm{g}_1\bm{g}_1^H+\bm{g}_2\bm{g}_2^H \right).
    \label{eq:mac_sum_capacity2}
\end{equation}
The interference-whitened channel strength of satellite $s$ is quantified by
\begin{equation}
    \alpha_s \triangleq \|\bm{g}_s\|_2^2, \quad s\in\{1,2\},
\end{equation}
while the spatial alignment between the two whitened channels is characterized by the coherence coefficient
\begin{equation}
    \bar\rho \triangleq \frac{|\bm{g}_1^H\bm{g}_2|} {\|\bm{g}_1\|_2\|\bm{g}_2\|_2} \in[0,1].
    \label{eq:whitened_coherence}
\end{equation}

\begin{proposition}[Capacity Decomposition]
\label{prop:two_satellite_capacity}
The slot sum capacity in \eqref{eq:mac_sum_capacity2} can be equivalently expressed as
\begin{equation}
    C_{\{1,2\}} = B\log_2\!\left( 1+\alpha_1+\alpha_2+ \alpha_1\alpha_2(1-\bar\rho^2) \right).
    \label{eq:two_satellite_capacity}
\end{equation}
\end{proposition}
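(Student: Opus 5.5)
The plan is to reduce the $M\times M$ determinant in \eqref{eq:mac_sum_capacity2} to a $2\times 2$ determinant, which can then be expanded explicitly. First, collect the whitened channels into $\bm{G}\triangleq[\bm{g}_1,\bm{g}_2]\in\mathbb{C}^{M\times 2}$, so that
\[
\bm{I}_M+\bm{g}_1\bm{g}_1^H+\bm{g}_2\bm{g}_2^H=\bm{I}_M+\bm{G}\bm{G}^H .
\]
Second, apply Sylvester's determinant identity $\det(\bm{I}_M+\bm{G}\bm{G}^H)=\det(\bm{I}_2+\bm{G}^H\bm{G})$. The right-hand side involves only the Gram matrix
\[
\bm{G}^H\bm{G}=\begin{bmatrix}\|\bm{g}_1\|_2^2 & \bm{g}_1^H\bm{g}_2\\ \bm{g}_2^H\bm{g}_1 & \|\bm{g}_2\|_2^2\end{bmatrix}.
\]

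Third, expand the $2\times 2$ determinant directly:
\[
(1+\alpha_1)(1+\alpha_2)-|\bm{g}_1^H\bm{g}_2|^2 .
\]
Substituting $|\bm{g}_1^H\bm{g}_2|^2=\bar\rho^2\|\bm{g}_1\|_2^2\|\bm{g}_2\|_2^2=\bar\rho^2\alpha_1\alpha_2$ from the definition \eqref{eq:whitened_coherence} gives $1+\alpha_1+\alpha_2+\alpha_1\alpha_2(1-\bar\rho^2)$. Taking $B\log_2$ of this expression yields \eqref{eq:two_satellite_capacity}.

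A degenerate case needs a short remark: if some $\bm{g}_s=\bm{0}$, then $\bar\rho$ is undefined. In that case the product $\alpha_1\alpha_2$ vanishes, so the formula holds for any convention $\bar\rho\in[0,1]$.

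For completeness, I would also note that \eqref{eq:mac_sum_capacity2} follows from \eqref{eq:mac_sum_capacity}. This is because $\det\bm{R}_{\mathcal{S}}/\det\bm{R}_0=\det(\bm{R}_0^{-1/2}\bm{R}_{\mathcal{S}}\bm{R}_0^{-1/2})$, and $\bm{R}_0\succ\bm{0}$ since $\sigma^2>0$. That identity is already stated in the paper, so I take it as given.

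The main obstacle is essentially only recalling Sylvester's identity, or equivalently applying the matrix determinant lemma twice. There is no genuine difficulty; the rest is bookkeeping.
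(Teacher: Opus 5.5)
Your proposal is correct and follows the paper's proof essentially step for step: Sylvester's identity reduces the problem to $\det(\bm{I}_2+\bm{G}^H\bm{G})$, this expands to $(1+\alpha_1)(1+\alpha_2)-|\bm{g}_1^H\bm{g}_2|^2$, and substituting $|\bm{g}_1^H\bm{g}_2|^2=\alpha_1\alpha_2\bar\rho^2$ gives the result. Your remark on the degenerate case $\bm{g}_s=\bm{0}$ is a small addition the paper omits: there $\bar\rho$ is undefined, but the formula still holds because $\alpha_1\alpha_2=0$.
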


\begin{IEEEproof}
Let $\bm{G}=[\bm{g}_1,\bm{g}_2]$. By Sylvester's determinant identity, $\det(\bm{I}_M+\bm{G}\bm{G}^H)=\det(\bm{I}_2+\bm{G}^H\bm{G})$. Since $\bm{I}_2+\bm{G}^H\bm{G}$ has diagonal entries $1+\alpha_1$ and $1+\alpha_2$ and off-diagonal entries $\bm{g}_1^H\bm{g}_2$ and $\bm{g}_2^H\bm{g}_1$, its determinant is $(1+\alpha_1)(1+\alpha_2)-|\bm{g}_1^H\bm{g}_2|^2$. Substituting $|\bm{g}_1^H\bm{g}_2|^2=\alpha_1\alpha_2\bar\rho^2$ into the determinant expression and using \eqref{eq:mac_sum_capacity2} yields \eqref{eq:two_satellite_capacity}.
\end{IEEEproof}

Proposition~\ref{prop:two_satellite_capacity} makes explicit how channel strength and spatial separability jointly determine the sum capacity. The terms $\alpha_1$ and $\alpha_2$ capture the individual interference-whitened channel strengths, whereas $\alpha_1\alpha_2(1-\bar\rho^2)$ captures the additional gain arising from their spatial complementarity. For fixed $\alpha_1$ and $\alpha_2$, this term decreases monotonically with the channel coherence $\bar\rho$. In the fully aligned case $\bar\rho=1$, it vanishes and the two channels contribute only through their aggregate strength, yielding $C_{\{1,2\}}=B\log_2(1+\alpha_1+\alpha_2)$. By contrast, when $\bar\rho=0$, the two whitened channels are orthogonal and the sum capacity becomes $C_{\{1,2\}}=B\log_2(1+\alpha_1)+B\log_2(1+\alpha_2)$, corresponding to two fully separable spatial modes. Hence, increasing channel strength is beneficial, but its capacity can be substantially reduced when the received channels are highly aligned. The relative importance of strength and separability further depends on the operating regime. At low SNR,
\begin{equation}
    C_{\{1,2\}} = \frac{B}{\ln 2}(\alpha_1+\alpha_2) +o(\alpha_1+\alpha_2),
    \label{eq:low_snr_insight}
\end{equation}
so the first-order capacity contribution is governed primarily by the individual channel strengths, while spatial separability has only a higher-order effect. At high SNR and for any $\bar\rho<1$,
\begin{equation}
    C_{\{1,2\}} \simeq B\log_2\!\left( \alpha_1\alpha_2(1-\bar\rho^2) \right),
    \label{eq:high_snr_insight}
\end{equation}
so spatial separability becomes inseparable from channel strength in determining the dominant capacity term. Therefore, selecting individually strong satellites does not necessarily yield a large multi-satellite sum capacity if their interference-whitened channels are highly aligned.

These observations motivate the joint optimization of satellite selection and RA control. Satellite selection determines which effective channels constitute the received multi-satellite channel and therefore shapes its strength--separability balance. RA control, in turn, modifies the element-wise directional gains and thereby reshapes both the strengths and relative spatial orientations of the effective channels.

\begin{remark}
\label{rem:gain_separability_tradeoff}
For multi-satellite reception, clustered arrival directions facilitate simultaneous directional-gain alignment across the selected links but tend to produce highly aligned received channel vectors, thereby reducing spatial separability. Conversely, more widely separated directions improve spatial separability but make simultaneous high-gain alignment across all selected links more difficult. This tradeoff can be partially alleviated through RA steering, which reshapes the element-wise directional gains and allows spatially complementary serving sets to retain sufficiently strong effective channels.
\end{remark}

\subsection{Optimization for Satellite Selection}
\label{subsec:submodular_selection}

Motivated by the preceding geometric insights, we develop a capacity-aware satellite-selection method. With the RA trajectory fixed, the effective-throughput objective decomposes across epochs as
\begin{equation}
    \mathcal{T}(\{\mathcal{S}_\ell\})=\sum_{\ell=1}^{L}F_\ell(\mathcal{S}_\ell),
\end{equation}
where, for epoch $\ell$ and any $\mathcal{S}\subseteq\mathcal{A}_\ell$, the corresponding epoch-level throughput is defined as
\begin{equation}
    F_\ell(\mathcal{S})\triangleq \frac{B\Delta t}{T_{\rm ob}\ln2}\sum_{n\in\mathcal{N}_\ell}\big(\ln\det\bm{R}_{\mathcal{S}}[n]-\ln\det\bm{R}_0[n]\big).
    \label{eq:epoch_set_function}
\end{equation}
Since both the objective and the satellite-selection constraints are separable across epochs, all epoch-level selection problems can be solved independently and in parallel. Hence, for each epoch $\ell$, the satellite-selection problem reduces to
\begin{subequations}
\label{prob:selection_design}
\begin{align}
    \max_{\mathcal{S}}\quad &F_\ell(\mathcal{S}) \label{prob:selection_design_obj}\\
    \mathrm{s.t.}\quad &\mathcal{S}\subseteq\mathcal{A}_\ell,\quad |\mathcal{S}|\leq K_{\max}.
    \label{prob:selection_constraints_v1}
\end{align}
\end{subequations}
Problem~\eqref{prob:selection_design} is a cardinality-constrained set optimization problem, which is generally NP-hard. To solve this problem, we first derive the conditional marginal gain of each candidate satellite, and then establish the monotone submodularity of $F_\ell$. Finally, based on the above characterization, we construct a tight modular surrogate function for $F_\ell$ and optimize it within a discrete MM framework.

\subsubsection{Marginal-Gain Characterization and Submodularity} For a candidate satellite $s\notin\mathcal{S}$, define its conditional marginal gain as $\Delta_\ell(s\mid\mathcal{S})\triangleq F_\ell(\mathcal{S}\cup\{s\})-F_\ell(\mathcal{S})$. Adding $s$ contributes the rank-one covariance term $P_s[n]\bm{h}_s[n]\bm{h}_s^H[n]$ to the current received covariance, i.e.,
\begin{equation}
    \bm{R}_{\mathcal{S}\cup\{s\}}[n]=\bm{R}_{\mathcal{S}}[n]+P_s[n]\bm{h}_s[n]\bm{h}_s^H[n].
\end{equation}
Applying the matrix determinant lemma gives
\begin{align}
    &\Delta_\ell(s\mid\mathcal{S}) =\frac{B\Delta t}{T_{\rm ob}\ln2}\sum_{n\in\mathcal{N}_\ell}\ln\!\left(\frac{\det\bm{R}_{\mathcal{S}\cup\{s\}}[n]}{\det\bm{R}_{\mathcal{S}}[n]}\right)\nonumber\\
    &\quad=\frac{B\Delta t}{T_{\rm ob}\ln2}\sum_{n\in\mathcal{N}_\ell}\ln\!\left(1+P_s[n]\bm{h}_s^H[n]\bm{R}_{\mathcal{S}}^{-1}[n]\bm{h}_s[n]\right).
    \label{eq:marginal_capacity}
\end{align}
Let $\bm{g}_s[n]\triangleq\sqrt{P_s[n]}\bm{R}_0^{-1/2}[n]\bm{h}_s[n]$ denote the interference-whitened effective channel of candidate $s$, and let $\bm{G}_{\mathcal{S}}[n]$ collect the whitened effective channels of the satellites in $\mathcal{S}$ as its columns. Then, by using the Woodbury identity, the quadratic term in \eqref{eq:marginal_capacity} can be rewritten as
\begin{align}\label{eq:marginal_strength_redundancy}
    &P_s[n]\bm{h}_s^H[n]\bm{R}_{\mathcal{S}}^{-1}[n]\bm{h}_s[n] =\|\bm{g}_s[n]\|_2^2\notag \\
    &\quad -\bm{g}_s^H[n]\bm{G}_{\mathcal{S}}[n] \big(\bm{I}+\bm{G}_{\mathcal{S}}^H[n]\bm{G}_{\mathcal{S}}[n]\big)^{-1} \bm{G}_{\mathcal{S}}^H[n]\bm{g}_s[n],
\end{align}
where the first term measures the candidate's interference-whitened channel strength, whereas the second quantifies its overlap with the selected-channel subspace and hence the loss of spatial separability. Therefore, a large conditional marginal gain favors candidates with both strong whitened channels and high spatial separability from the currently selected satellites, consistent with Proposition~\ref{prop:two_satellite_capacity}.

\begin{proposition}[Submodularity of satellite selection]
\label{prop:submodularity}
The epoch-level throughput function $F_\ell(\mathcal{S})$ is monotone and submodular on $\mathcal{A}_\ell$.
\end{proposition}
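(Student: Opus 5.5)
The plan is to work directly from the marginal-gain expression \eqref{eq:marginal_capacity}. Since $F_\ell$ is a positive multiple of a sum over $n\in\mathcal{N}_\ell$ of slot-level set functions, and nonnegative combinations of monotone submodular functions remain monotone submodular, it suffices to fix a slot $n$ and show that $f_n(\mathcal{S})\triangleq\ln\det\bm{R}_{\mathcal{S}}[n]-\ln\det\bm{R}_0[n]$ is monotone and submodular on $\mathcal{A}_\ell$. We also note that $F_\ell(\emptyset)=0$, since $\bm{R}_\emptyset[n]=\bm{R}_0[n]$.

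\emph{Monotonicity.} For $s\notin\mathcal{S}$, the slot-$n$ marginal gain is $\ln(1+P_s[n]\bm{h}_s^H[n]\bm{R}_{\mathcal{S}}^{-1}[n]\bm{h}_s[n])$. Because $\bm{R}_{\mathcal{S}}[n]\succeq\bm{R}_0[n]\succ\bm{0}$, its inverse exists and is positive definite. The quadratic form is therefore nonnegative, and so is the logarithm, which gives monotonicity.

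\emph{Submodularity.} We use the diminishing-returns characterization: for $\mathcal{S}\subseteq\mathcal{S}'\subseteq\mathcal{A}_\ell$ and $s\notin\mathcal{S}'$, we must show $\Delta_\ell(s\mid\mathcal{S})\geq\Delta_\ell(s\mid\mathcal{S}')$. By definition \eqref{eq:total_covariance},
\begin{equation*}
\bm{R}_{\mathcal{S}'}[n]=\bm{R}_{\mathcal{S}}[n]+\sum_{r\in\mathcal{S}'\setminus\mathcal{S}}P_r[n]\bm{h}_r[n]\bm{h}_r^H[n]\succeq\bm{R}_{\mathcal{S}}[n]\succ\bm{0}.
\end{equation*}
Matrix inversion is operator-antitone on positive definite matrices, so $\bm{R}_{\mathcal{S}'}^{-1}[n]\preceq\bm{R}_{\mathcal{S}}^{-1}[n]$. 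Hence
\begin{equation*}
\bm{h}_s^H[n]\bm{R}_{\mathcal{S}'}^{-1}[n]\bm{h}_s[n]\leq\bm{h}_s^H[n]\bm{R}_{\mathcal{S}}^{-1}[n]\bm{h}_s[n].
\end{equation*}
Since $\ln(1+P_s[n]x)$ is increasing in $x$, the slot-level marginal gain for $\mathcal{S}$ is at least that for $\mathcal{S}'$. Summing over $n\in\mathcal{N}_\ell$ with the positive constant $B\Delta t/(T_{\rm ob}\ln 2)$ then gives the inequality for $F_\ell$.

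The only nontrivial ingredient is the antitonicity of the inverse, i.e., $\bm{A}\succeq\bm{B}\succ\bm{0}\Rightarrow\bm{B}^{-1}\succeq\bm{A}^{-1}$. This is standard and I would justify it by congruence: $\bm{B}^{-1/2}\bm{A}\bm{B}^{-1/2}\succeq\bm{I}$, and inverting this matrix preserves the ordering relative to $\bm{I}$ because its eigenvalues are at least one.

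As an alternative, one could argue from the whitened form \eqref{eq:marginal_strength_redundancy} by showing that the projection-like redundancy term grows as columns are added to $\bm{G}_{\mathcal{S}}[n]$. The covariance-ordering argument is cleaner and is the route I would take. I expect no real obstacle beyond stating this lemma carefully.
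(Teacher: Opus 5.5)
Your proposal is correct and follows essentially the same route as the paper: monotonicity from the nonnegativity of the log-marginal gain in \eqref{eq:marginal_capacity}, and submodularity from the Loewner ordering $\bm{R}_{\mathcal{S}}[n]\preceq\bm{R}_{\mathcal{S}'}[n]$ combined with the antitonicity of the matrix inverse. Your congruence-based justification of antitonicity and the slot-wise reduction are correct and simply spell out steps the paper leaves implicit.
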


\begin{IEEEproof}
From \eqref{eq:marginal_capacity}, $\Delta_\ell(s\mid\mathcal{S})\geq0$ because $\bm{R}_{\mathcal{S}}[n]\succ\bm{0}$ and $P_s[n]\geq0$. Hence, $F_\ell$ is monotone. To prove submodularity, consider $\mathcal{U}\subseteq\mathcal{V}$ and $s\notin\mathcal{V}$. Then,
\begin{equation}
    \bm{R}_{\mathcal{V}}[n]-\bm{R}_{\mathcal{U}}[n]=\sum_{j\in\mathcal{V}\setminus\mathcal{U}}P_j[n]\bm{h}_j[n]\bm{h}_j^H[n]\succeq\bm{0}.
\end{equation}
Thus, $\bm{R}_{\mathcal{U}}[n]\preceq\bm{R}_{\mathcal{V}}[n]$, and inversion reverses the positive-definite order, giving
\begin{equation}
    \bm{R}_{\mathcal{U}}^{-1}[n]\succeq\bm{R}_{\mathcal{V}}^{-1}[n].
\end{equation}
Substituting this relation into \eqref{eq:marginal_capacity} yields
\begin{equation}
    \Delta_\ell(s\mid\mathcal{U})\geq\Delta_\ell(s\mid\mathcal{V}),
\end{equation}
which establishes the submodularity of $F_\ell$.
\end{IEEEproof}

Proposition~\ref{prop:submodularity} has two key implications for solving the satellite-selection problem \eqref{prob:selection_design}. First, monotonicity implies that the optimal set is $\mathcal{A}_\ell$ when $|\mathcal{A}_\ell|\leq K_{\max}$, and can be chosen with cardinality $K_{\max}$ otherwise. Second, submodularity shows that the marginal contribution of a candidate is set-dependent and diminishes as the serving set grows. Building on the marginal gain and the submodularity of $F_\ell$, we next construct a modular surrogate for discrete MM optimization.

\subsubsection{Modular Surrogate Construction and Discrete MM Optimization}

For epochs with $|\mathcal{A}_\ell|\leq K_{\max}$, Proposition~\ref{prop:submodularity} implies that all candidate satellites can be selected directly. We therefore focus on the nontrivial case $|\mathcal{A}_\ell|>K_{\max}$. To construct a tight modular surrogate at $\mathcal{S}_\ell^{(r)}$, we form a permutation $\pi_\ell^{(r)}$ of $\mathcal{A}_\ell$ whose first $K_{\max}$ elements constitute $\mathcal{S}_\ell^{(r)}$. The corresponding chain prefixes are defined as
\begin{align}
    \mathcal{P}_{\ell,0}^{(r)}&\triangleq\varnothing, \label{eq:mm_chain1}\\
    \mathcal{P}_{\ell,i}^{(r)}&\triangleq\{\pi_\ell^{(r)}(1),\ldots,\pi_\ell^{(r)}(i)\},\quad i=1,\ldots,|\mathcal{A}_\ell|. \label{eq:mm_chain2}
\end{align}
The incumbent elements are ordered greedily according to their conditional marginal gains relative to the current chain prefix. After the first $K_{\max}$ positions have formed $\mathcal{S}_\ell^{(r)}$, the same rule is continued over the remaining candidates to complete the permutation for all available satellites. For each chain element, define the weight
\begin{align}\label{eq:mm_chain_weight}
    \zeta_{\ell,\pi_\ell^{(r)}(i)}^{(r)}&\triangleq F_\ell(\mathcal{P}_{\ell,i}^{(r)})-F_\ell(\mathcal{P}_{\ell,i-1}^{(r)})=\Delta_\ell\big(\pi_\ell^{(r)}(i)\mid\mathcal{P}_{\ell,i-1}^{(r)}\big).
\end{align}
Each weight is an exact conditional marginal gain evaluated along the permutation chain. Once the permutation is fixed, these chain-dependent marginal gains become fixed candidate weights. By submodularity, each such weight lower-bounds the marginal contribution evaluated with respect to any subset of its preceding permutation elements. This motivates the modular surrogate
\begin{equation}
    \underline F_\ell^{(r)}(\mathcal{U}) \triangleq \sum_{s\in\mathcal{U}}\zeta_{\ell,s}^{(r)}, \qquad \mathcal{U}\subseteq\mathcal{A}_\ell.
    \label{eq:mm_minorizer}
\end{equation}
\begin{proposition}[Incumbent-tight modular surrogate]
\label{prop:modular_minorization}
The surrogate $\underline F_\ell^{(r)}$ provides a global lower bound on $F_\ell$, i.e.,
\begin{equation}
    \underline F_\ell^{(r)}(\mathcal{U})\leq F_\ell(\mathcal{U}),\quad \forall \mathcal{U}\subseteq\mathcal{A}_\ell,
    \label{eq:global_modular_lower_bound}
\end{equation}
and is tight at the incumbent set, i.e.,
\begin{equation}
    \underline F_\ell^{(r)}(\mathcal{S}_\ell^{(r)})=F_\ell(\mathcal{S}_\ell^{(r)}).
    \label{eq:mm_minorizer_tightness}
\end{equation}
\end{proposition}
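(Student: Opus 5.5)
Both claims follow from the chain construction in \eqref{eq:mm_chain1}--\eqref{eq:mm_chain_weight}, using two ingredients: the telescoping structure of the weights, and the diminishing-return property from Proposition~\ref{prop:submodularity}. We normalize at the empty set. Since $\bm{R}_{\varnothing}[n]=\bm{R}_0[n]$, definition \eqref{eq:epoch_set_function} gives $F_\ell(\varnothing)=0$.

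\emph{Tightness.} By construction, the first $K_{\max}$ entries of $\pi_\ell^{(r)}$ are exactly the elements of $\mathcal{S}_\ell^{(r)}$, so $\mathcal{P}_{\ell,K_{\max}}^{(r)}=\mathcal{S}_\ell^{(r)}$. (If the incumbent has fewer elements, we use the prefix of length $|\mathcal{S}_\ell^{(r)}|$ instead.) Summing the weights of these elements telescopes:
\begin{equation*}
\underline F_\ell^{(r)}(\mathcal{S}_\ell^{(r)})=\sum_{i=1}^{K_{\max}}\big(F_\ell(\mathcal{P}_{\ell,i}^{(r)})-F_\ell(\mathcal{P}_{\ell,i-1}^{(r)})\big)=F_\ell(\mathcal{P}_{\ell,K_{\max}}^{(r)})-F_\ell(\varnothing)=F_\ell(\mathcal{S}_\ell^{(r)}),
\end{equation*}
which is \eqref{eq:mm_minorizer_tightness}. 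The greedy ordering of the incumbent elements plays no role here; only the prefix property matters.

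\emph{Global lower bound.} Fix any $\mathcal{U}\subseteq\mathcal{A}_\ell$ and list its elements in the order they appear in $\pi_\ell^{(r)}$: $u_1,\ldots,u_k$ with positions $i_1<\cdots<i_k$. Set $\mathcal{U}_j\triangleq\{u_1,\ldots,u_j\}$, with $\mathcal{U}_0=\varnothing$. Telescoping again,
\begin{equation*}
F_\ell(\mathcal{U})=\sum_{j=1}^{k}\Delta_\ell(u_j\mid\mathcal{U}_{j-1}).
\end{equation*}
The key observation is that $\mathcal{U}_{j-1}\subseteq\mathcal{P}_{\ell,i_j-1}^{(r)}$, because every $u_{j'}$ with $j'<j$ sits at a position $i_{j'}<i_j$. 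Also $u_j\notin\mathcal{P}_{\ell,i_j-1}^{(r)}$. Proposition~\ref{prop:submodularity} therefore applies with $\mathcal{U}_{j-1}\subseteq\mathcal{P}_{\ell,i_j-1}^{(r)}$ and gives
\begin{equation*}
\Delta_\ell(u_j\mid\mathcal{U}_{j-1})\geq\Delta_\ell\big(u_j\mid\mathcal{P}_{\ell,i_j-1}^{(r)}\big)=\zeta_{\ell,u_j}^{(r)}.
\end{equation*}
Summing over $j$ yields $F_\ell(\mathcal{U})\geq\sum_{s\in\mathcal{U}}\zeta^{(r)}_{\ell,s}=\underline F_\ell^{(r)}(\mathcal{U})$. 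For $\mathcal{U}=\varnothing$ both sides are zero.

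The only delicate step is this re-indexing: ordering $\mathcal{U}$ consistently with the permutation, so that each element's conditioning set in $\mathcal{U}$ is contained in its chain prefix. Once that inclusion is in place, submodularity does the rest. Because $\pi_\ell^{(r)}$ is a permutation of all of $\mathcal{A}_\ell$, every candidate receives a well-defined weight, and the bound holds for every $\mathcal{U}\subseteq\mathcal{A}_\ell$ regardless of cardinality.
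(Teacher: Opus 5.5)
Your proof is correct and follows essentially the same route as the paper. Tightness comes from telescoping the first $K_{\max}$ chain increments down to $F_\ell(\varnothing)=0$, and the global bound comes from ordering $\mathcal{U}$ along $\pi_\ell^{(r)}$ so that $\mathcal{U}_{j-1}\subseteq\mathcal{P}_{\ell,i_j-1}^{(r)}$ and then applying submodularity termwise. Your explicit checks that $F_\ell(\varnothing)=0$ and $u_j\notin\mathcal{P}_{\ell,i_j-1}^{(r)}$, and your remark on incumbents with fewer than $K_{\max}$ elements, make explicit small points the paper leaves implicit.
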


\begin{IEEEproof}
Let $R\triangleq|\mathcal{U}|$ and index the elements of $\mathcal{U}$ according to their positions in $\pi_\ell^{(r)}$ as $\pi_\ell^{(r)}(i_1),\ldots,\pi_\ell^{(r)}(i_R)$, where $i_1<\cdots<i_R$. Define $\mathcal{U}_j\triangleq\{\pi_\ell^{(r)}(i_1),\ldots,\pi_\ell^{(r)}(i_j)\}$ with $\mathcal{U}_0=\varnothing$. For $\pi_\ell^{(r)}(i_j)$, $\mathcal{U}_{j-1}$ contains only the $j-1$ selected elements preceding it in the permutation, whereas $\mathcal{P}_{\ell,i_j-1}^{(r)}$ contains all $i_j-1$ preceding elements. Hence, $\mathcal{U}_{j-1}\subseteq\mathcal{P}_{\ell,i_j-1}^{(r)}$, and submodularity gives
\begin{align}
    \Delta_\ell\big(\pi_\ell^{(r)}(i_j)\mid\mathcal{U}_{j-1}\big) \geq \Delta_\ell\big(\pi_\ell^{(r)}(i_j)\mid\mathcal{P}_{\ell,i_j-1}^{(r)}\big) =\zeta_{\ell,\pi_\ell^{(r)}(i_j)}^{(r)}.
\end{align}
Summing over the elements of $\mathcal{U}$ yields
\begin{align}
    F_\ell(\mathcal{U}) &=\sum_{j=1}^{R}\Delta_\ell\big(\pi_\ell^{(r)}(i_j)\mid\mathcal{U}_{j-1}\big) \geq\sum_{j=1}^{R}\zeta_{\ell,\pi_\ell^{(r)}(i_j)}^{(r)} =\underline F_\ell^{(r)}(\mathcal{U}),
\end{align}
which proves \eqref{eq:global_modular_lower_bound}. Besides, since $\mathcal{S}_\ell^{(r)}=\mathcal{P}_{\ell,K_{\max}}^{(r)}$, the first $K_{\max}$ chain increments telescope to $F_\ell(\mathcal{S}_\ell^{(r)})-F_\ell(\varnothing)=F_\ell(\mathcal{S}_\ell^{(r)})$, proving \eqref{eq:mm_minorizer_tightness}.
\end{IEEEproof}

Proposition~\ref{prop:modular_minorization} converts the set-dependent marginal structure of $F_\ell$ into additive candidate weights while preserving the exact objective value at the incumbent. Therefore, the next serving set can be obtained by maximizing this modular surrogate under the cardinality constraint:
\begin{equation}
    \mathcal{S}_\ell^{(r+1)}\in\arg\max_{\substack{\mathcal{U}\subseteq\mathcal{A}_\ell,|\mathcal{U}|=K_{\max}}}\sum_{s\in\mathcal{U}}\zeta_{\ell,s}^{(r)}.
    \label{eq:discrete_mm_update}
\end{equation}
Since the surrogate is modular, \eqref{eq:discrete_mm_update} is solved exactly by selecting the $K_{\max}$ candidates with the largest chain weights. Building on the above discussion, we develop a discrete MM framework that iteratively updates the serving set by optimizing an incumbent-tight modular surrogate under the cardinality constraint, as summarized in Algorithm~\ref{alg:satellite_selection}.

\begin{algorithm}[t] \small
\caption{Discrete MM Algorithm for Satellite Selection}
\label{alg:satellite_selection}
\begin{algorithmic}[1]
\REQUIRE Initial sets $\{\mathcal{S}_\ell^{(0)}\}$ and tolerance $\epsilon_{\rm z}$.
\FOR{$\ell=1,\ldots,L$ \textbf{in parallel}}
    \STATE Set $r\leftarrow0$.
    \REPEAT
        \STATE Construct the incumbent-anchored greedy permutation $\pi_\ell^{(r)}$ and calculate the chain weights $\{\zeta_{\ell,s}^{(r)}\}$ via \eqref{eq:mm_chain1}--\eqref{eq:mm_chain_weight}.
        \STATE Obtain $\mathcal{S}_\ell^{(r+1)}$ by solving \eqref{eq:discrete_mm_update}.
        \STATE Set $r\leftarrow r+1$.
    \UNTIL{$F_\ell(\mathcal{S}_\ell^{(r)})-F_\ell(\mathcal{S}_\ell^{(r-1)})\leq\epsilon_{\rm z}$}
    \STATE Set $\mathcal{S}_\ell\leftarrow\mathcal{S}_\ell^{(r)}$.
\ENDFOR
\RETURN $\{\mathcal{S}_\ell\}_{\ell=1}^{L}$.
\end{algorithmic}
\end{algorithm}

\begin{proposition}[Convergence of Algorithm~\ref{alg:satellite_selection}]
\label{prop:discrete_mm_monotonicity}
For each epoch $\ell$, Algorithm~\ref{alg:satellite_selection} generates a nondecreasing sequence of epoch throughput values, i.e.,
\begin{equation}
    F_\ell(\mathcal{S}_\ell^{(r+1)})\geq F_\ell(\mathcal{S}_\ell^{(r)}),\quad \forall r.
    \label{eq:discrete_mm_monotonicity}
\end{equation}
Moreover, the resulting objective sequence is guaranteed to converge.
\end{proposition}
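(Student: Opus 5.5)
The argument is the standard minorize--maximize sandwich, using Proposition~\ref{prop:modular_minorization}. Fix an epoch $\ell$ and an iteration $r$. The trivial case $|\mathcal{A}_\ell|\le K_{\max}$ has the algorithm return $\mathcal{A}_\ell$ immediately, so assume $|\mathcal{A}_\ell|>K_{\max}$. We may also take the incumbent $\mathcal{S}_\ell^{(r)}$ to have cardinality exactly $K_{\max}$. This holds for $r\ge1$ by construction of \eqref{eq:discrete_mm_update}. For $r=0$, any smaller initial set can first be padded, which does not decrease $F_\ell$ by the monotonicity in Proposition~\ref{prop:submodularity}. With this in place, the permutation $\pi_\ell^{(r)}$ is well defined, and its first $K_{\max}$ elements are exactly $\mathcal{S}_\ell^{(r)}$.

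The key chain of inequalities is
\begin{equation*}
F_\ell(\mathcal{S}_\ell^{(r+1)})\ \ge\ \underline F_\ell^{(r)}(\mathcal{S}_\ell^{(r+1)})\ \ge\ \underline F_\ell^{(r)}(\mathcal{S}_\ell^{(r)})\ =\ F_\ell(\mathcal{S}_\ell^{(r)}).
\end{equation*}
The first inequality is the global lower bound \eqref{eq:global_modular_lower_bound} evaluated at $\mathcal{U}=\mathcal{S}_\ell^{(r+1)}$. The second holds because $\mathcal{S}_\ell^{(r)}$ is itself feasible for \eqref{eq:discrete_mm_update}, since it is a subset of $\mathcal{A}_\ell$ of size $K_{\max}$, while $\mathcal{S}_\ell^{(r+1)}$ maximizes the modular surrogate over that feasible family. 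It is worth noting that choosing the top $K_{\max}$ weights attains this maximum exactly. The final equality is the tightness property \eqref{eq:mm_minorizer_tightness}. Together these give \eqref{eq:discrete_mm_monotonicity}.

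For convergence, the sequence $\{F_\ell(\mathcal{S}_\ell^{(r)})\}_r$ is nondecreasing. It is bounded above by $\max_{|\mathcal{U}|\le K_{\max}}F_\ell(\mathcal{U})$, which is finite because $\mathcal{A}_\ell$ is finite and each $\ln\det$ term is finite, as $\bm{R}_0[n]\succ\bm{0}$. By the monotone convergence theorem the sequence converges. Since $F_\ell$ takes only finitely many values on the finite feasible family, the sequence in fact becomes constant after finitely many strict increases. Hence the stopping criterion with any $\epsilon_{\rm z}\ge0$ is met in finitely many iterations, and the parallel epochs do not interact.

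The main point needing care is the feasibility of the incumbent in \eqref{eq:discrete_mm_update}, i.e., the equal-cardinality issue at initialization. It is handled by the monotonicity padding argument above.
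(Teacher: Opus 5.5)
Your proposal is correct and matches the paper's proof: it uses the same sandwich $F_\ell(\mathcal{S}_\ell^{(r+1)})\ge\underline F_\ell^{(r)}(\mathcal{S}_\ell^{(r+1)})\ge\underline F_\ell^{(r)}(\mathcal{S}_\ell^{(r)})=F_\ell(\mathcal{S}_\ell^{(r)})$ from Proposition~\ref{prop:modular_minorization}, and it gets convergence from the finiteness of the feasible serving-set family. Your padding argument for an initial set smaller than $K_{\max}$ and your finite-termination remark are small clarifications that the paper leaves implicit.
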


\begin{IEEEproof}
By Proposition~\ref{prop:modular_minorization}, the modular surrogate $\underline F_\ell^{(r)}$ globally lower-bounds $F_\ell$ and is tight at $\mathcal{S}_\ell^{(r)}$. Since \eqref{eq:discrete_mm_update} maximizes this surrogate, we have
\begin{align}
    F_\ell(\mathcal{S}_\ell^{(r+1)}) &\geq \underline F_\ell^{(r)}(\mathcal{S}_\ell^{(r+1)}) \geq \underline F_\ell^{(r)}(\mathcal{S}_\ell^{(r)}) =F_\ell(\mathcal{S}_\ell^{(r)}),
\end{align}
which proves \eqref{eq:discrete_mm_monotonicity}. Since the feasible serving-set family is finite, the nondecreasing objective sequence is bounded above and therefore converges.
\end{IEEEproof}

\subsection{Optimization for RA Trajectory}
\label{subsec:ra_update}
The RA update complements satellite selection by reshaping the effective channel strengths and spatial relationships. With the serving sets ${\mathcal{S}_\ell}$ fixed, the remaining optimization variables are the RA boresight trajectories. Accordingly, the RA-trajectory optimization problem is given by
\begin{subequations}
\label{prob:ra_design}
\begin{align}
    \max_{\bm{F}_{\rm RA}}\quad &\mathcal{T}(\bm{F}_{\rm RA}) \label{prob:ra_design_obj_part}\\
    \mathrm{s.t.}\quad &\bm{f}_m[\ell]\in\mathcal{F}_{\rm RA}, \quad \forall m,\ell, \label{prob:ra_orientation_constraints_part}\\
    &\bm{f}_m^T[\ell-1]\bm{f}_m[\ell] \geq\cos\Omega_{\rm g}, \quad \forall m,\ \ell\geq2.
    \label{prob:ra_slew_constraints_part}
\end{align}
\end{subequations}
Although the objective is separable across epochs, the slew constraints \eqref{prob:ra_slew_constraints_part} couple adjacent RA configurations. We therefore derive feasible local RA updates with neighboring epochs fixed and exploit the chain-structured coupling to organize them into a two-color parallel update scheme.

\subsubsection{Epoch-Level RA Gradient} For compactness, write the $m$th channel coefficient of source $r$ as
\begin{equation}
    h_{m,r}[n] =\upsilon_{m,r}[n][x_{m,r}[\ell,n]]_+^p,
    \label{eq:compact_channel_entry}
\end{equation}
where $\upsilon_{m,r}[n]\triangleq\sqrt{\kappa_{\max}}c_r[n][\bm{a}_r[n]]_m$ and $x_{m,r}[\ell,n]\triangleq\bm{f}_m^T[\ell]\bm{d}_r[n]$. Its derivative with respect to $\bm{f}_m[\ell]$ is given by
\begin{equation}
    \bm{b}_{m,r}[\ell,n] \triangleq
    \begin{cases}
    \upsilon_{m,r}[n]p x_{m,r}^{p-1}[\ell,n]\bm{d}_r[n], & x_{m,r}[\ell,n]>0,\\
    \bm{0}, & x_{m,r}[\ell,n]\leq0.
    \end{cases}
    \label{eq:channel_entry_gradient}
\end{equation}
At the front-side boundary, the derivative is set to zero. We further define
\begin{equation}
\begin{split}
    u_{m,s}[n] &\triangleq [\bm{R}_{\mathcal{S}_\ell}^{-1}[n]\bm{h}_s[n]]_m,\\
    v_{m,q}[n] &\triangleq [(\bm{R}_{\mathcal{S}_\ell}^{-1}[n]-\bm{R}_0^{-1}[n]) \bm{h}_q[n]]_m.
\end{split}
    \label{eq:gradient_auxiliary_variables}
\end{equation}
Using $\mathrm{d}\ln\det\bm{X}=\operatorname{tr}(\bm{X}^{-1}\mathrm{d}\bm{X})$, the Euclidean gradient of the effective throughput with respect to $\bm{f}_m[\ell]$ is derived as
\begin{equation}
\begin{split}
    \bm{g}_{m,\ell}^{\rm E} =\frac{2B\Delta t}{T_{\rm ob}\ln2} &\sum_{n\in\mathcal{N}_\ell} \Bigg\{ \sum_{s\in\mathcal{S}_\ell}P_s[n]\, \Re\!\left(u_{m,s}^*[n]\bm{b}_{m,s}[\ell,n]\right)\\
    +& \sum_{q\in\mathcal{I}_\ell}P_q[n]\, \Re\!\left(v_{m,q}^*[n]\bm{b}_{m,q}[\ell,n]\right) \Bigg\}.
\end{split}
    \label{eq:euclidean_ra_gradient}
\end{equation}
The two sums account for the scheduled channels and orientation-dependent external interference, respectively, with the latter vanishing when $\mathcal{I}_\ell=\varnothing$ or the interference covariance is orientation independent. Projecting onto the tangent space of the unit sphere gives
\begin{equation}
    \bm{g}_{m,\ell}^{\rm R} = (\bm{I}_3-\bm{f}_m[\ell]\bm{f}_m^T[\ell]) \bm{g}_{m,\ell}^{\rm E},
    \label{eq:tangent_ra_gradient}
\end{equation}
which is the Riemannian gradient under the inherited Euclidean metric.

\subsubsection{Slew-Feasible Per-Epoch RA Update} With the neighboring boresights fixed, define $\mathcal{L}_\ell\triangleq\{\ell-1,\ell+1\}\cap\{1,\ldots,L\}$ and the local feasible set
\begin{equation}
\begin{split}
    \mathcal{K}_{m,\ell}\triangleq \big\{\bm{x}\in\mathbb{R}^3:\ & \bm{x}\in\mathcal{F}_{\rm RA}, \,\,\bm{f}_m^T[\ell']\bm{x}\geq\cos\Omega_{\rm g}, \,\, \forall \ell'\in\mathcal{L}_\ell \big\}.
\end{split}
    \label{eq:local_cap_intersection}
\end{equation}
The update direction is obtained from the linear oracle
\begin{equation}
    \bm{s}_{m,\ell} \in \arg\max_{\bm{x}\in\mathcal{K}_{m,\ell}} \left(\bm{g}_{m,\ell}^{\rm R}\right)^T\bm{x}.
    \label{eq:local_linear_oracle}
\end{equation}
To solve this oracle, we enumerate the possible sets of linear constraints that may be active at its optimizer. Let $I$ denote one such set of active constraints. Accordingly, $\bm{A}_I$ collects the corresponding constraint normals, while $\bm{b}_I$ collects their boundary values. Define
\begin{equation}
\begin{split}
    \bm{x}_I &=\bm{A}_I^T(\bm{A}_I\bm{A}_I^T)^\dagger\bm{b}_I,\\
    \bm{P}_I &=\bm{I}_3-\bm{A}_I^T(\bm{A}_I\bm{A}_I^T)^\dagger\bm{A}_I.
\end{split}
    \label{eq:active_set_components}
\end{equation}
For each feasible active set $I$ satisfying $\|\bm{x}_I\|_2\leq1$ and $\bm{P}_I\bm{g}_{m,\ell}^{\rm R}\neq\bm{0}$, a candidate solution to \eqref{eq:local_linear_oracle} is
\begin{equation}
    \widehat{\bm{x}}_I = \bm{x}_I+ \sqrt{1-\|\bm{x}_I\|_2^2}\, \frac{\bm{P}_I\bm{g}_{m,\ell}^{\rm R}} {\|\bm{P}_I\bm{g}_{m,\ell}^{\rm R}\|_2}.
    \label{eq:active_set_candidate}
\end{equation}
Among all these feasible candidates, the one achieving the largest linear objective is selected as the oracle solution $\bm{s}_{m,\ell}$. For $\alpha\in[0,\bar\alpha]$ with $\bar\alpha\in(0,1/2)$, the RA boresight is then updated toward $\bm{s}_{m,\ell}$ as
\begin{equation}
    \bm{f}_m^+[\ell;\alpha] = \frac{(1-\alpha)\bm{f}_m[\ell] +\alpha\bm{s}_{m,\ell}} {\|(1-\alpha)\bm{f}_m[\ell] +\alpha\bm{s}_{m,\ell}\|_2}.
    \label{eq:normalized_chord_update}
\end{equation}
For $\theta_{\max}<\pi/2$ and $\Omega_{\rm g}\leq\pi/2$, the numerator is a convex combination of two feasible points and satisfies all nonnegative linear lower bounds. Its normalization therefore preserves the steering-cap and neighboring slew constraints. Moreover,
\begin{equation}
    \left\|(1-\alpha)\bm{f}_m[\ell]+\alpha\bm{s}_{m,\ell}\right\|_2 \geq1-2\alpha>0,
\end{equation}
thus the update is always well defined. Define the epoch oracle gap as
\begin{equation}
    \mathcal{G}_\ell \triangleq \sum_{m=1}^{M} \left(\bm{g}_{m,\ell}^{\rm R}\right)^T (\bm{s}_{m,\ell}-\bm{f}_m[\ell]) \geq0.
    \label{eq:epoch_product_gap}
\end{equation}
Because the current boresight is feasible for the oracle, $\mathcal{G}_\ell$ is nonnegative and equals the directional derivative of the epoch objective at $\alpha=0$. When $\mathcal{G}_\ell>0$, an Armijo search over $\alpha=\bar\alpha\beta^j$, $j=0,1,\ldots$, finds a feasible step satisfying
\begin{equation}
    F_\ell(\mathcal{S}_\ell;\bm{F}_{\rm RA}^+[\ell]) \geq F_\ell(\mathcal{S}_\ell;\bm{F}_{\rm RA}[\ell]) +\sigma_{\rm A}\alpha\mathcal{G}_\ell,
    \label{eq:armijo_condition}
\end{equation}
where $\beta,\sigma_{\rm A}\in(0,1)$.

\subsubsection{Two-Color Parallel RA-Trajectory Updates}
The preceding development provides a feasible per-epoch update with neighboring RA configurations fixed. Since the inter-epoch slew constraints form a chain, the corresponding coupling graph is bipartite, allowing all odd epochs to be updated independently and in parallel with the even epochs fixed, and vice versa. This odd--even procedure reduces the serial depth from $L$ epoch updates to two parallel rounds while preserving feasibility and monotonic ascent. The resulting RA-trajectory update is summarized in Algorithm~\ref{alg:ra_update}.

\begin{algorithm}[t] \small
\caption{Two-Color RA-Trajectory Update}
\label{alg:ra_update}
\begin{algorithmic}[1]
\REQUIRE Serving sets $\{\mathcal{S}_\ell\}$ and a feasible RA trajectory $\bm{F}_{\rm RA}$.
\FOR{$c\in\{\mathrm{odd},\mathrm{even}\}$}
    \FOR{epochs $\ell$ of color $c$ \textbf{in parallel}}
        \STATE Compute the Riemannian gradients $\{\bm{g}_{m,\ell}^{\rm R}\}_{m=1}^{M}$.
        \STATE Solve \eqref{eq:local_linear_oracle} for $\{\bm{s}_{m,\ell}\}_{m=1}^{M}$ and compute $\mathcal{G}_\ell$ from \eqref{eq:epoch_product_gap}.
        \IF{$\mathcal{G}_\ell>0$}
            \STATE Find $\alpha$ by Armijo search satisfying \eqref{eq:armijo_condition}.
            \STATE Update $\bm{f}_m[\ell]\leftarrow\bm{f}_m^+[\ell;\alpha]$, $\forall m$.
        \ENDIF
    \ENDFOR
\ENDFOR
\RETURN $\bm{F}_{\rm RA}$.
\end{algorithmic}
\end{algorithm}

\subsection{Overall Algorithm, Convergence, and Complexity}
\label{subsec:overall_algorithm}

The preceding discrete-MM satellite-selection procedure and slew-feasible RA-trajectory optimization procedure are integrated into an alternating framework. Starting from feasible serving sets and an RA trajectory, the two blocks are applied successively until the effective-throughput improvement falls below a prescribed tolerance, as summarized in Algorithm~\ref{alg:overall}.

\textit{\textbf{Convergence Analysis:}}
Algorithm~\ref{alg:overall} generates a feasible and nondecreasing sequence of effective-throughput values. For satellite selection, the modular surrogate globally lower-bounds the epoch objective and is tight at the current serving set. Maximizing this surrogate therefore guarantees a nondecreasing epoch objective, as established in Proposition~\ref{prop:discrete_mm_monotonicity}. For RA optimization, each update preserves the steering and slew constraints by construction, while the Armijo condition guarantees nondecreasing throughput. Hence, neither optimization block decreases $\mathcal{T}$. Since the feasible RA-trajectory set is compact, the serving-set space is finite, and the effective throughput is bounded above, the objective sequence generated by Algorithm~\ref{alg:overall} converges to a finite limit.

\textit{\textbf{Computational Complexity Analysis:}}
For satellite selection, each conditional marginal-gain evaluation in \eqref{eq:marginal_capacity} can be implemented via rank-one inverse updates with complexity $\mathcal{O}(M^2)$ per slot. Constructing the complete chain permutation thus requires $\mathcal{O}(|\mathcal{N}_\ell||\mathcal{A}_\ell|^2M^2)$ operations per discrete-MM iteration for epoch $\ell$, up to the initial covariance factorization, while the modular update \eqref{eq:discrete_mm_update} incurs negligible additional cost. The selection problems are independent across epochs and can be processed in parallel. For RA optimization, the dominant cost comes from the covariance factorizations and linear solves required for the gradients, with complexity $\mathcal{O}\!\left(|\mathcal{N}_\ell|\left[M^3+(K_{\max}+|\mathcal{I}_\ell|)M^2\right]\right)$ per epoch. Each local oracle \eqref{eq:local_linear_oracle} is three dimensional and has constant complexity per RA element. The two-color structure further allows same-color epochs to be updated in parallel, reducing the serial depth of one RA update to two color rounds. Hence, both optimization blocks have polynomial per-iteration complexity and substantial epoch-level parallelism.

\begin{algorithm}[t] \small
\caption{The Overall Algorithm for Joint Satellite Selection and RA-Trajectory Optimization}
\label{alg:overall}
\begin{algorithmic}[1]
\REQUIRE Candidate sets $\{\mathcal{A}_\ell\}$, feasible initial serving sets $\{\mathcal{S}_\ell\}$, a feasible RA trajectory $\bm{F}_{\rm RA}$, and tolerance $\epsilon_{\rm T}>0$.
\STATE Update $\{\mathcal{S}_\ell\}$ using Algorithm~\ref{alg:satellite_selection} under the current $\bm{F}_{\rm RA}$.
\STATE Set $\mathcal{T}^{\rm old}\leftarrow\mathcal{T}(\{\mathcal{S}_\ell\},\bm{F}_{\rm RA})$.
\REPEAT
    \STATE Update $\bm{F}_{\rm RA}$ using Algorithm~\ref{alg:ra_update} with $\{\mathcal{S}_\ell\}$ fixed.
    \STATE Update $\{\mathcal{S}_\ell\}$ using Algorithm~\ref{alg:satellite_selection} under the updated $\bm{F}_{\rm RA}$, initialized from the current serving sets.
    \STATE Set $\mathcal{T}^{\rm new}\leftarrow\mathcal{T}(\{\mathcal{S}_\ell\},\bm{F}_{\rm RA})$ and $\Delta\mathcal{T}\leftarrow\mathcal{T}^{\rm new}-\mathcal{T}^{\rm old}$.
    \STATE Set $\mathcal{T}^{\rm old}\leftarrow\mathcal{T}^{\rm new}$.
\UNTIL{$\Delta\mathcal{T}\leq\epsilon_{\rm T}$}
\RETURN $\{\mathcal{S}_\ell\}$ and $\bm{F}_{\rm RA}$.
\end{algorithmic}
\end{algorithm}

\section{Simulation Results and Discussion}
\label{sec:numerical_results}

We consider the first-generation Starlink Walker--Delta shell $53^\circ\!:\!1584/72/1$, at an altitude of $550$~km.  The GS is located at $(50^\circ\mathrm{N},120^\circ\mathrm{E})$.  Unless otherwise stated, it employs a $3\times3$ half-wavelength UPA, schedules at most $K_{\max}=6$ satellites, and operates at $18.2$~GHz over $100$~MHz. The remaining physical and control parameters are listed in Table~\ref{tab:simulation_parameters}. For each random orbital realization, the serving constellation's reference RAAN and Walker phase, together with the initial Earth-rotation angle, are drawn independently from $[0,2\pi)$. The external constellation is independently generated with configuration $70^\circ:\!1584/72/7$ at an altitude of $600$~km.  External satellites that remain visible throughout an epoch are ranked by epoch-average large-scale path gain, and the strongest $Q_{\rm I}$ satellites are retained.  Their common leakage effective isotropic radiated power (EIRP) $P_{\rm I}^{\rm leak}$ is calibrated separately in each orbital realization to achieve a reference interference-to-noise ratio (INR), defined as
\begin{equation}
    \overline{\mathrm{INR}}_{\rm ref} =\frac{1}{N}\sum_{n=1}^{N} \frac{\operatorname{tr}\!\left(\bm{R}_{\rm I}^{\rm ref}[n]\right)} {M\sigma^2},
    \label{eq:reference_inr}
\end{equation}
where, for $n\in\mathcal{N}_\ell$, $\bm{R}_{\rm I}^{\rm ref}[n]=P_{\rm I}^{\rm leak}\sum_{q\in\mathcal{I}_\ell}\bm{h}_q^{\rm ref}[n](\bm{h}_q^{\rm ref}[n])^H$ uses $\bm{h}_q^{\rm ref}[n]$ evaluated per square-root watt of leakage EIRP, with unit transmit gain and $\bm{f}_m=\bm{e}_z$ for every receive element. The calibrated $P_{\rm I}^{\rm leak}$ is then held common to all schemes in the same orbit realization.

\begin{table}[t]
\centering
\caption{Default Simulation Parameters}
\label{tab:simulation_parameters}
\footnotesize
\renewcommand{\arraystretch}{1.06}
\setlength{\tabcolsep}{4.2pt}
\begin{tabular}{lc}
\hline
\textbf{Parameter} & \textbf{Value} \\
\hline
Serving Walker--Delta shell & $53^\circ\!:\!1584/72/1$ \\
Serving altitude, $H$ & $550$ km \\
GS location and elevation mask, $\epsilon_{\min}$ & $(50^\circ\mathrm{N},120^\circ\mathrm{E})$, $10^\circ$ \\
GS array and spacing & $3\times3$, $d_x=d_y=\lambda/2$ \\
Maximum number of streams, $K_{\max}$ & $6$ \\
Scheduled-satellite transmit power & $25$ dBW \\
Carrier frequency and bandwidth & $18.2$ GHz, $100$ MHz \\
Serving-satellite transmit gain & $38$ dBi \\
Additional clear-sky link loss & $3$ dB \\
Receiver noise temperature & $500$ K \\
RA exponent and maximum tilt & $p=4$, $\theta_{\max}=60^\circ$ \\
Angular velocity and guard duration & $\omega_{\max}=20^\circ/{\rm s}$, $T_{\rm g}=1$ s \\
Slots, slot duration, and epochs & $N=192$, $\Delta t=0.5$ s, $L=8$ \\
Wall-clock observation duration & $T_{\rm ob}=103$ s \\
External interferers and reference INR & $Q_{\rm I}=4$, $\overline{\mathrm{INR}}_{\rm ref}=10$ dB \\
\hline
\end{tabular}
\end{table}

\subsection{Validation of the Strength--Separability Tradeoff}

We isolate the strength--separability tradeoff identified in Proposition~\ref{prop:two_satellite_capacity} and Remark~\ref{rem:gain_separability_tradeoff} using a symmetric geometry with six equal-power satellites uniformly spaced by $60^\circ$ in azimuth on the $H=550$~km serving shell. Their angular spread is parameterized by the common off-zenith angle $\psi$, defined as the angle between each satellite direction and the local zenith. We sweep $\psi$ from $0^\circ$ to $60^\circ$, with $\psi=0^\circ$ denoting the co-directional limit. At each $\psi$, all six links have the same elevation and slant range. Increasing $\psi$ spreads the satellite directions farther apart, improving channel separability at the cost of a longer propagation distance and reduced directional-gain sharing. Fig.~\ref{fig:controlled_strength_separability} shows the two components of the tradeoff and their throughput consequence. For each $\psi$, with the six-satellite set fixed, the RA boresights are optimized for effective throughput using the RA-trajectory update in Sec.~\ref{subsec:ra_update}. Fig.~\ref{fig:controlled_strength_separability}(a) reports the normalized aggregate channel strength $\sum_{s=1}^{6}\|\bm{h}_s(\psi)\|^2/\sum_{s=1}^{6}\|\bm{h}_s(0)\|^2$ and the effective spatial rank of the column-normalized channel matrix $\widetilde{\bm{H}}(\psi)=[\widetilde{\bm{h}}_1(\psi),\ldots,\widetilde{\bm{h}}_6(\psi)]$, where $\widetilde{\bm{h}}_s(\psi)=\bm{h}_s(\psi)/\|\bm{h}_s(\psi)\|$. Letting $\sigma_i(\psi)$ denote the singular values of $\widetilde{\bm{H}}(\psi)$ and $\pi_i(\psi)=\sigma_i^2(\psi)/\sum_j\sigma_j^2(\psi)$, its effective rank is $r_{\rm eff}(\psi)=\exp[-\sum_i\pi_i(\psi)\ln\pi_i(\psi)]$. Fig.~\ref{fig:controlled_strength_separability}(b) reports the resulting effective throughput, reflecting the combined effect of the decreasing channel strength and increasing spatial rank shown in Fig.~\ref{fig:controlled_strength_separability}~(a).

At $\psi=0^\circ$, the six channels are co-directional. Fig.~\ref{fig:controlled_strength_separability}(a) therefore shows the maximum aggregate channel strength but an effective spatial rank of only one, resulting in the low throughput of $1.297$~Gbps in Fig.~\ref{fig:controlled_strength_separability}(b). As $\psi$ increases to $25^\circ$, the normalized channel strength decreases to $0.339$, whereas the effective rank rises to $4.72$. The resulting throughput nevertheless increases to its maximum of $5.005$~Gbps, showing that the spatial-separability gain outweighs the channel-strength loss in this regime. Beyond $\psi=25^\circ$, the effective rank rapidly approaches its six-dimensional ceiling, reaching $5.92$ at $\psi=40^\circ$, while the aggregate channel strength continues to decrease. Consequently, further separation provides diminishing spatial benefit, and the throughput falls to $3.646$~Gbps at $\psi=60^\circ$. Since the RAs are reoptimized at every sweep point, this unimodal behavior reflects the residual strength--separability tradeoff after directional-gain optimization. The intermediate optimum confirms that neither maximizing channel strength nor maximizing spatial separation is generally throughput optimal, thereby motivating their joint evaluation through the log-det objective. With external interference, the same argument applies in the interference-whitened channel space, where useful-channel strength, spatial separability, and interference exposure are jointly accounted for.

\begin{figure}[t]
  \centering
  \subfloat[Channel strength and spatial separability]{
    \includegraphics[width=0.8\columnwidth]{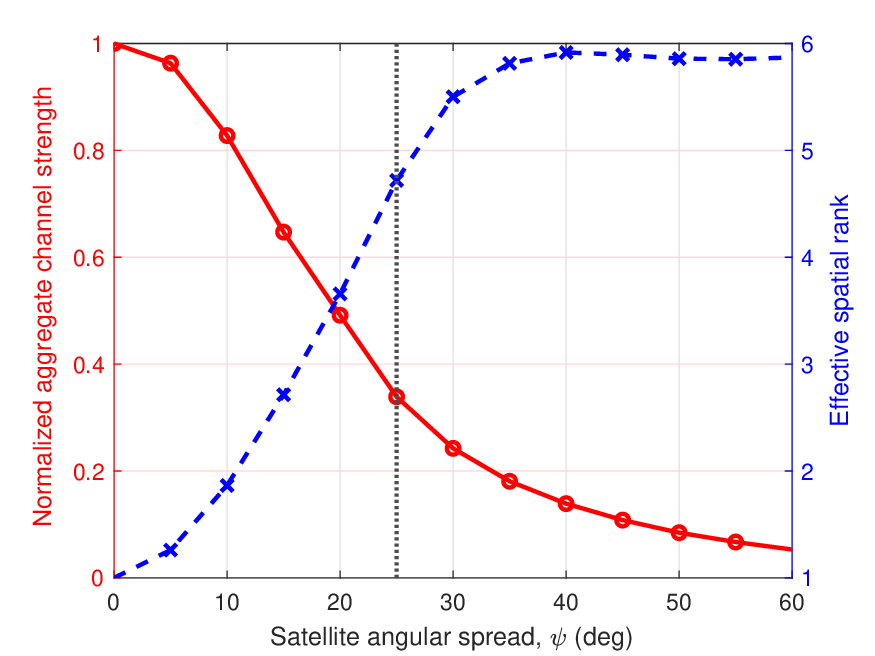}
    \label{fig:controlled_strength_rank}} \\
  \subfloat[Effective throughput]{
    \includegraphics[width=0.8\columnwidth]{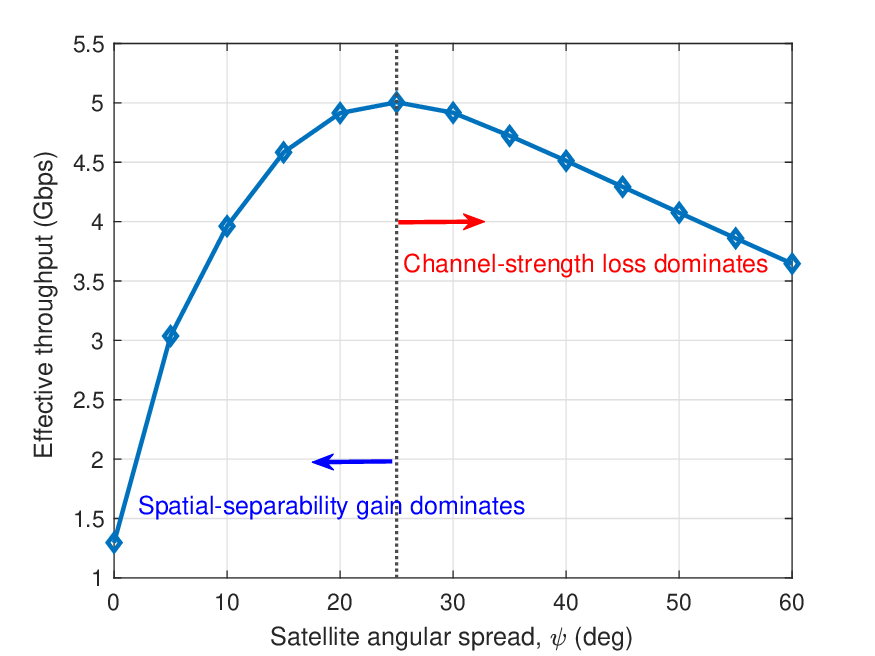}
    \label{fig:controlled_effective_throughput}}
  \caption{Strength--separability tradeoff and the resulting effective throughput with optimized RAs.}
  \label{fig:controlled_strength_separability}
\end{figure}

\subsection{Validation of the Proposed Solution}

Before presenting the performance comparisons, Fig.~\ref{fig:convergence} plots the average effective throughput of the proposed joint algorithm versus the number of iterations for $3\times3$, $4\times4$, and $5\times5$ UPAs. The results are averaged over 200 independent orbital realizations under the default $10$-dB reference INR. For all three array configurations, the average effective throughput increases monotonically and stabilizes after approximately 15 iterations, consistent with the convergence analysis in Section~\ref{subsec:overall_algorithm}. Larger arrays consistently achieve higher throughput, while their similar convergence profiles indicate that the convergence rate remains largely insensitive to the array size. It is also observed that the proposed solution provides more pronounced gains for compact arrays, where joint satellite selection and RA reconfiguration compensate for the more limited intrinsic spatial resolution of the fixed UPA.

\begin{figure}[t]
\centering
\includegraphics[width=0.8\columnwidth]{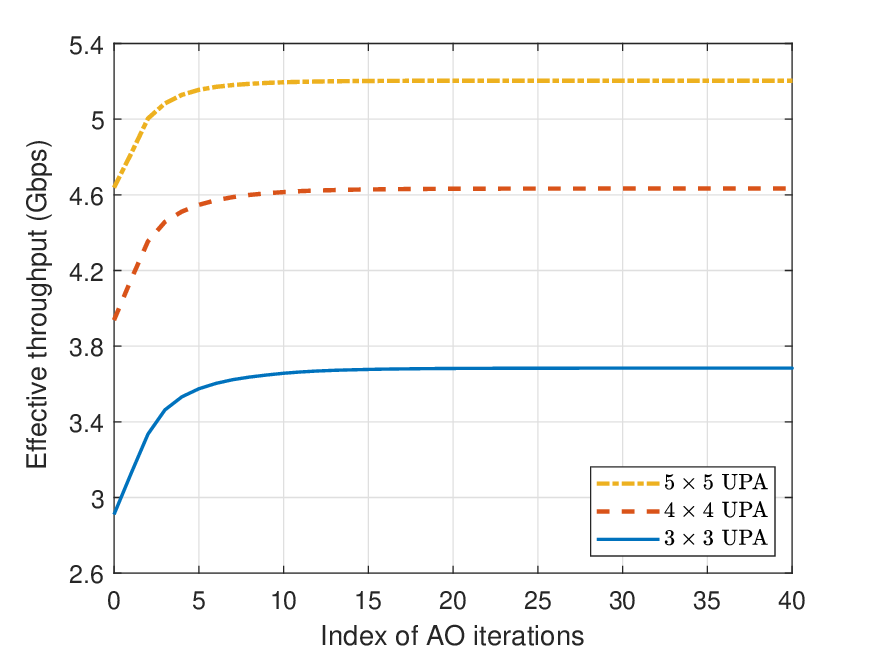}
\caption{Convergence behavior of the proposed Algorithm~\ref{alg:overall}.}
\label{fig:convergence}
\end{figure}

Next, to isolate the individual and joint contributions of satellite selection and RA-trajectory optimization, we compare the following four schemes.

\begin{itemize}
    \item \textbf{RA + MM:} Satellite selection is performed using the discrete MM method in Sec.~\ref{subsec:submodular_selection}, while the RA boresight trajectories are optimized using the two-color update in Sec.~\ref{subsec:ra_update}. The two blocks alternate according to Algorithm~\ref{alg:overall}, constituting the complete proposed scheme.

    \item \textbf{RA + Gain-TopK:} Gain-TopK selects the satellites with the largest epoch-average received powers, while the RA boresight trajectories are optimized using the method proposed in Sec.~\ref{subsec:ra_update}. This benchmark isolates the benefit of capacity-aware satellite selection relative to strength-only selection.

    \item \textbf{Fixed UPA + MM:} All boresights are fixed as $\bm{f}_m[\ell]=\bm{e}_z$, while satellite selection is performed using the discrete MM method in Sec.~\ref{subsec:submodular_selection}. This benchmark isolates the contribution of RA reconfiguration.

    \item \textbf{Fixed UPA + Gain-TopK:} All boresights are fixed as $\bm{f}_m[\ell]=\bm{e}_z$, while Gain-TopK selects up to $K_{\max}$ satellites with the largest epoch-average received powers.
\end{itemize}
All simulation results are averaged over 200 independent random orbital realizations.

\begin{figure}[t]
\centering
\includegraphics[width=0.8\columnwidth]{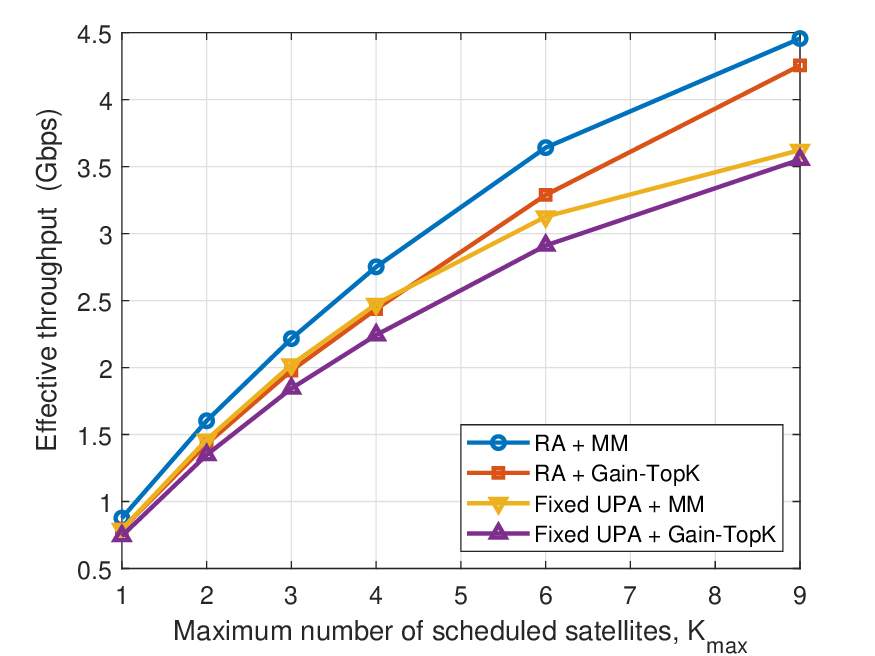}
\caption{Effective throughput versus the number of scheduled satellites.}
\label{fig:kmax}
\end{figure}

Fig.~\ref{fig:kmax} shows the effective throughput versus the maximum number of scheduled satellites. The GS employs the default $3\times3$ half-wavelength UPA, $Q_{\rm I}=4$, and $\overline{\mathrm{INR}}_{\rm ref}=10$~dB. Since the array provides at most $M=9$ spatial dimensions, increasing $K_{\max}$ from one to nine gradually shifts the system from single-stream reception toward full spatial loading. When $K_{\max}=1$, channel separability is not a limiting factor, and the performance differences mainly originate from interference-aware satellite selection and directional-gain optimization. As $K_{\max}$ increases, all schemes improve by exploiting additional spatial dimensions. However, the marginal gain diminishes at large $K_{\max}$ as the available spatial degrees of freedom become saturated. The comparison between \textit{Fixed UPA + MM} and \textit{RA + Gain-TopK} highlights the changing roles of satellite selection and RA reconfiguration. For small-to-moderate $K_{\max}$, \textit{Fixed UPA + MM} is comparable to or slightly better than \textit{RA + Gain-TopK}, indicating that selecting satellites with spatially innovative and interference-compatible channels is particularly important when only part of the available spatial dimensions is utilized. As $K_{\max}$ approaches $M$, however, \textit{RA + Gain-TopK} gradually surpasses \textit{Fixed UPA + MM}. In this regime, satellite selection has progressively less opportunity to introduce new spatial dimensions, whereas RA reconfiguration can still redistribute directional gains and reshape interference coupling across the scheduled links. By jointly exploiting these complementary mechanisms, \textit{RA + MM} consistently achieves the highest throughput. These results indicate that satellite selection makes a relatively larger contribution in the underloaded regime, while RA reconfiguration becomes increasingly important when multiple streams compete for limited spatial resources.

\begin{figure}[t]
\centering
\includegraphics[width=0.8\columnwidth]{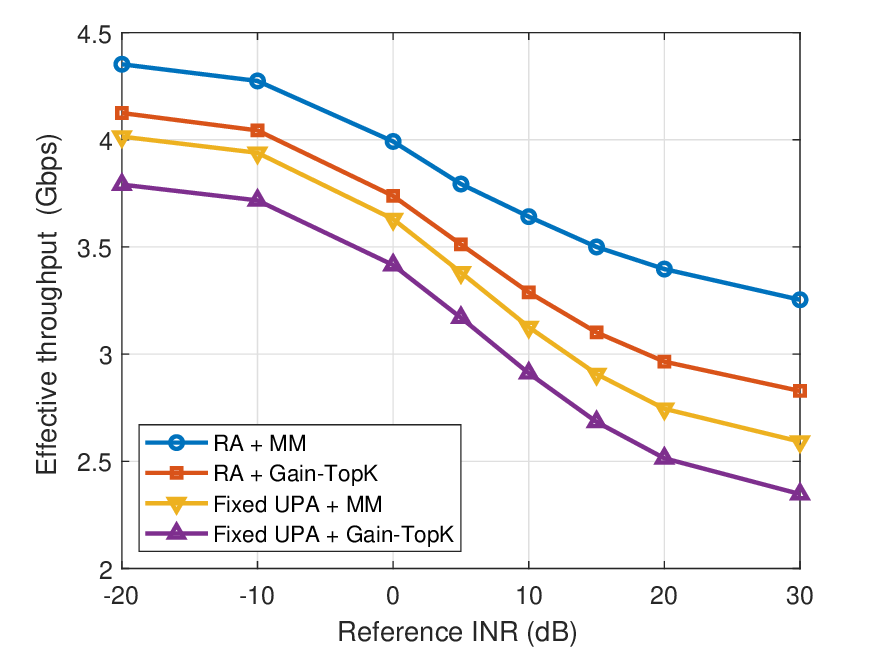}
\caption{Effective throughput versus the reference INR.}
\label{fig:inr}
\end{figure}

Fig.~\ref{fig:inr} compares the schemes as the external-interference level is varied, with $K_{\max}=6$ and $Q_{\rm I}=4$. Although the effective throughput of every scheme decreases as the interference becomes stronger, \textit{RA + MM} degrades more gradually, and its gaps over the two single-component schemes become increasingly pronounced. Under weak interference, interference exposure plays a smaller role in the scheduling decision, making the interference-blind \textit{Gain-TopK} benchmark less disadvantaged. As the INR increases, however, desired-link strength alone becomes increasingly misleading: a satellite that is strong in isolation may occupy a heavily contaminated spatial direction, whereas the proposed \textit{MM} selection directly accounts for the interference-aware log-det objective. Meanwhile, \textit{RA} optimization reshapes the directional gains of both the desired and interfering links. Consequently, MM-based satellite selection provides a larger benefit when combined with optimized RAs than with the fixed UPA, while RA optimization is more effective when paired with MM selection than with \textit{Gain-TopK}. The widening gaps therefore reveal that satellite selection and RA reconfiguration become increasingly complementary as external interference intensifies.

\begin{figure}[t]
\centering
\includegraphics[width=0.8\columnwidth]{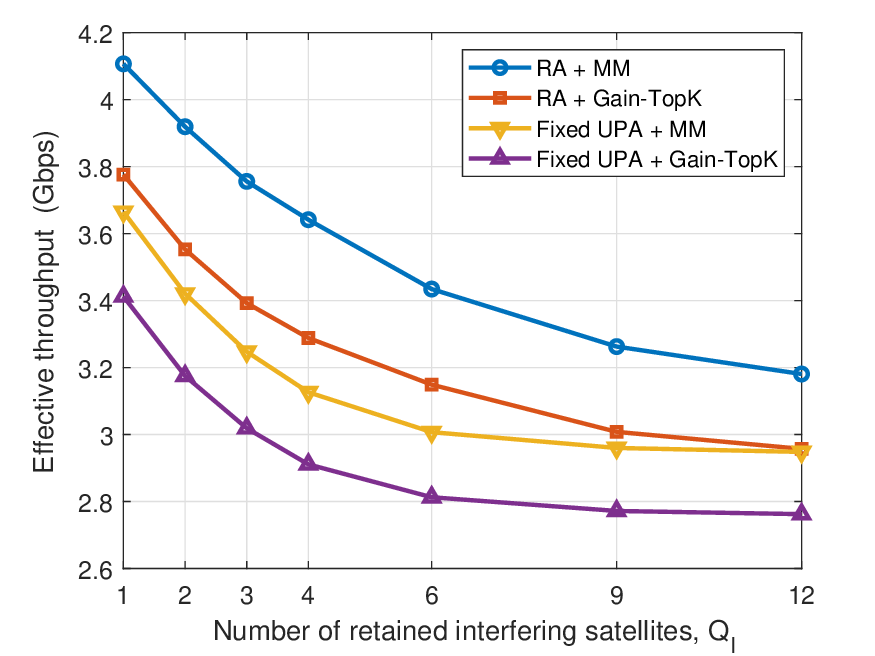}
\caption{Effective throughput versus the number of interfering satellites.}
\label{fig:qi}
\end{figure}

Fig.~\ref{fig:qi} examines the effect of the number of retained interfering satellites while fixing $\overline{\mathrm{INR}}_{\rm ref}=10$~dB. For each $Q_{\rm I}$, the leakage EIRP is recalibrated according to \eqref{eq:reference_inr}, so the figure primarily reflects changes in the spatial structure of the interference rather than its aggregate power. With only a few interferers, the interference covariance is low rank and concentrated in a limited number of arrival directions. MM-based satellite selection can then favor desired channels occupying less contaminated spatial directions, while \textit{RA} reconfiguration can reduce their exposure to the concentrated interference. Consequently, \textit{RA + MM} exhibits its largest advantage in this regime. As $Q_{\rm I}$ increases, the same interference power is distributed over more directions, the interference covariance becomes higher rank, and the throughput of all four schemes decreases. The gaps between the joint scheme and its two single-component counterparts gradually narrow because the increasingly diffuse interference leaves fewer clean directions that can be exploited by either satellite selection or RA steering. Accordingly, \textit{RA + Gain-TopK} and \textit{Fixed UPA + MM} become nearly indistinguishable at large $Q_{\rm I}$, whereas their joint optimization remains consistently superior. The curves eventually flatten as $Q_{\rm I}$ approaches and exceeds the array dimension $M=9$, beyond which additional interferers introduce progressively less new spatial structure.

\begin{figure}[t]
\centering
\includegraphics[width=0.8\columnwidth]{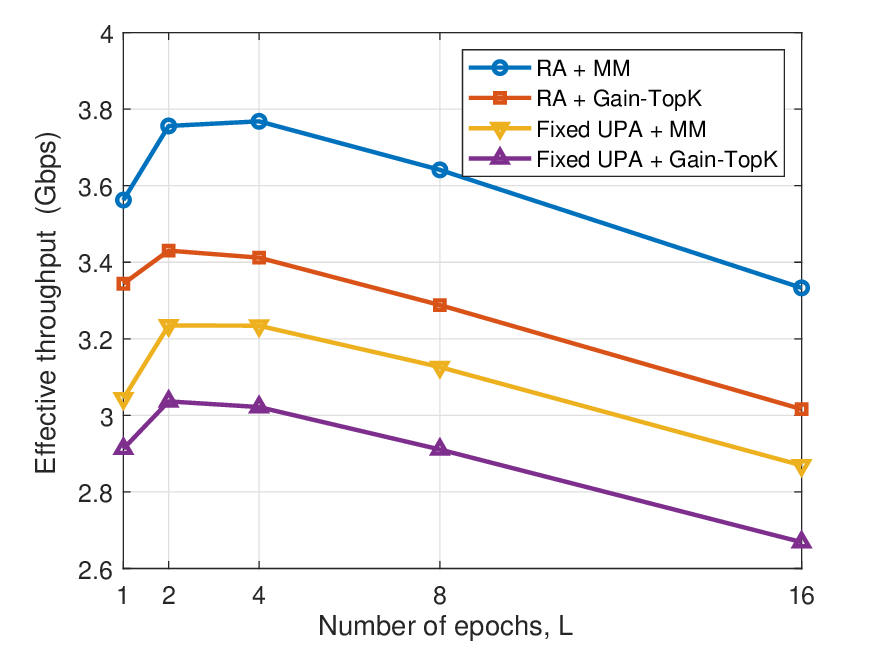}
\caption{Effective throughput versus the number of scheduling epochs.}
\label{fig:num_epoch}
\end{figure}

Fig.~\ref{fig:num_epoch} investigates the impact of the number of scheduling epochs over the fixed observation duration $T_{\rm ob}=103$~s. Increasing $L$ introduces a tradeoff: more frequent schedule and RA updates improve adaptation to the time-varying satellite geometry and interference environment, whereas additional guard intervals reduce the payload duty factor. All schemes initially benefit from dividing the observation window into a small number of epochs. When $L=1$, a single satellite schedule and RA configuration must be maintained throughout the entire interval. The improvement of the two fixed-UPA schemes with increasing $L$ indicates that schedule refreshing alone can already provide effective temporal adaptation. \textit{RA + MM} continues to benefit up to a moderate epoch count because interference-aware satellite selection and RA reconfiguration jointly exploit the evolving spatial channels. However, when $L$ becomes large, the marginal gain from faster adaptation is outweighed by the increasing guard overhead, leading to performance degradation for all schemes. The similar turnover observed for the fixed-UPA schemes confirms that this degradation mainly results from the frame-structure overhead rather than the RA optimization itself. The performance advantage of \textit{RA + MM} is most pronounced at a moderate epoch count, where sufficient temporal flexibility is achieved without excessive reconfiguration cost. These results suggest that, under the considered LEO dynamics and guard duration, a moderate update timescale is sufficient, while overly frequent reconfiguration provides limited additional benefit.

\begin{figure}[t]
\centering
\includegraphics[width=0.8\columnwidth]{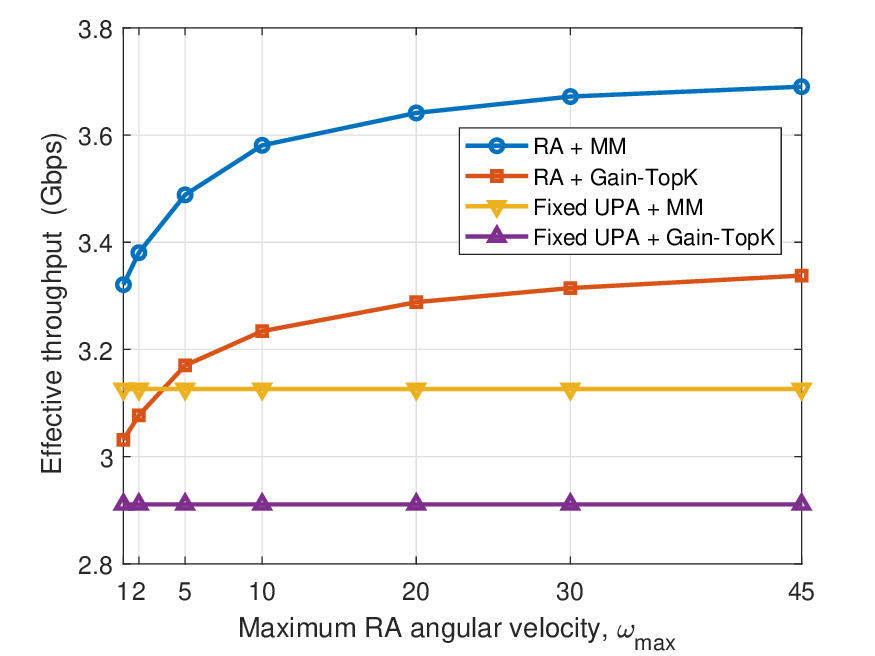}
\caption{Effective throughput versus the maximum RA angular velocity.}
\label{fig:angular_velocity}
\end{figure}

Fig.~\ref{fig:angular_velocity} examines the effect of the maximum RA angular velocity, which limits the boresight displacement allowed between consecutive epochs. As $\omega_{\max}$ increases, both RA-enabled schemes improve monotonically, whereas the two \textit{fixed-UPA} schemes remain unchanged, confirming that the observed gains arise from increased RA actuation flexibility. A particularly revealing crossover occurs at low angular velocities: \textit{Fixed UPA + MM} outperforms \textit{RA + Gain-TopK}, showing that interference-aware satellite selection can be more valuable than slowly reconfigurable RAs applied to a strength-only schedule. More notably, \textit{RA + MM} at the lowest tested angular velocity slightly outperforms \textit{RA + Gain-TopK} even at the highest angular velocity. Thus, faster RA actuation cannot fully compensate for a satellite set with poor spatial separability or interference exposure, while the proposed selection method substantially relaxes the required actuator speed. The RA-enabled curves gradually saturate beyond approximately $20^\circ/{\rm s}$, indicating that the allowed inter-epoch motion is then sufficient to follow the evolving favorable boresight directions. Further increasing the angular velocity therefore provides only marginal throughput improvement. These observations demonstrate that satellite selection and RA actuation are complementary rather than interchangeable, while also supporting the practical choice of a moderate RA angular velocity.

\section{Conclusion}
\label{sec:conclusion}

This paper investigated joint satellite selection and RA boresight-trajectory optimization for concurrent multi-satellite reception in LEO satellite-to-ground communications. We developed a two-timescale framework that accounts for epoch-held serving sets and RA configurations, time-varying satellite geometry, exogenous cochannel interference, and inter-epoch slew constraints. By analyzing the marginal capacity gain in the interference-whitened domain, we revealed the joint roles of effective channel strength and spatial separability and established the monotonicity and submodularity of the satellite-selection objective. Based on this structure, we developed a discrete MM method using an incumbent-tight modular lower-bound surrogate, together with slew-feasible RA updates organized through a two-color parallel update scheme. The resulting alternating algorithm preserves feasibility and guarantees monotonic improvement and convergence of the objective value. Simulations demonstrated consistent gains from the joint design and revealed an optimal balance between channel strength and spatial separability. Satellite selection is particularly important in underloaded and actuator-limited regimes, whereas RA shaping becomes more influential near full spatial loading, with stronger complementarity under severe external interference.

\bibliographystyle{IEEEtran}
\bibliography{bibtex}

\end{document}